\DeclareMathOperator{\tr}{Tr}
\numberwithin{equation}{section}
\numberwithin{figure}{section}
\theoremstyle{theorem}
\newtheorem{theorem}{\protect\theoremname}
\theoremstyle{corollary}
\theoremstyle{remark}
\newtheorem{remark}[theorem]{\protect\remarkname}
\theoremstyle{lemma}
\newtheorem{lemma}[theorem]{\protect\lemmaname}
\newlist{casenv}{enumerate}{4}
\setlist[casenv]{leftmargin=*,align=left,widest={iiii}}
\setlist[casenv,1]{label={{\itshape\ \casename} \arabic*.},ref=\arabic*}
\setlist[casenv,2]{label={{\itshape\ \casename} \roman*.},ref=\roman*}
\setlist[casenv,3]{label={{\itshape\ \casename\ \alph*.}},ref=\alph*}
\setlist[casenv,4]{label={{\itshape\ \casename} \arabic*.},ref=\arabic*}
\providecommand{\corollaryname}{Corollary}
\providecommand{\lemmaname}{Lemma}
\providecommand{\remarkname}{Remark}
\providecommand{\casename}{Case}
\providecommand{\theoremname}{Theorem}
\begin{document}
	
	\title[Blow-up Profile of Neutron Stars in the HFB theory]{Blow-up Profile of Neutron Stars \\ in the Hartree--Fock--Bogoliubov theory}
	
	\author[D.-T. Nguyen]{Dinh-Thi Nguyen}
\address{Dinh-Thi Nguyen, Mathematisches Institut, Ludwig--Maximilians--Universit\"at M\"unchen (LMU), Theresienstrasse 39, 80333 Munich, Germany, and Munich Center for Quantum Science and Technology (MCQST), Schellingstrasse 4, 80799 Munich, Germany.} 
\email{\href{mailto:nguyen@math.lmu.de}{nguyen@math.lmu.de}}
	
	\subjclass[2010]{81V17, 35Q55, 49J40}
	\keywords{Chandrasekhar limit mass, Concentration compactness, Gravitational interaction, Hartree--Fock--Bogoliubov theory, Lane--Emden solution, Mass concentration, Minimizers, Neutron stars}
	
	\maketitle
	
	\begin{abstract}
		We consider the gravitational collapse for neutron stars in the Hartree--Fock--Bogoliubov theory. We prove that when the number particle becomes large and the gravitational constant is small such that the attractive interaction strength approaches the \emph{Chandrasekhar limit mass} slowly, the minimizers develop a universal blow-up profile. It is given by the Lane--Emden solution.
	\end{abstract}
	
	\section{Introduction}
	
	\label{intro}
	
	A neutron star is a relativistic system of identical fermions in $\mathbb{R}^3$ with Newtonian gravitational interaction. From the first principles of quantum mechanics, such a system is typically described by the $N$-particle Hamiltonian
	\begin{equation}\label{hamiltonian}
	H_{N} = \sum_{i=1}^{N}\sqrt{-\Delta_{x_i}+m^{2}}-\kappa\sum_{1\leq i<j \leq N}|x_i-x_j|^{-1}
	\end{equation}
	acting on $\bigwedge^N L^{2}(\mathbb{R}^3,\mathbb{C}^q)$, the Hilbert space of square-integrable functions which are anti-symmetric under the permutations of space-spin variables $q\geq 1$ ($q=2$ in nature). Here $m>0$ is the neutron mass (we choose the unit $\hbar=c=1$) and $\kappa=Gm^{2}$ with $G$ the gravitational constant. 
	
	It is a fundamental fact that the neutron star collapses (namely $H_{N}$ is not bounded from below) if the particle number is too big, such that
	$$
	\tau:=\kappa N^{2/3} > \tau_c.
	$$ 
	The critical constant $\tau_c$ was first computed by  Chandrasekhar \cite{Chandrasekhar-31a} using an effective semi-classical theory, and then confirmed rigorously by Lieb and Yau \cite{LieYau-87} using the many-body Schr\"odinger theory. In fact, $\tau_{c}$ is the optimal constant in the Hardy--Littlewood--Sobolev inequality
	\begin{equation}\label{ineq:HLS}
	\frac{\tau_{c}}{2}\iint_{\mathbb{R}^3\times \mathbb{R}^3}\frac{\rho(x)\rho(y)}{|x-y|}{\rm d}x{\rm d}y \leq K_{\rm cl}\|\rho\|_{L^{4/3}}^{4/3}\|\rho\|_{L^1}^{2/3},\quad \forall \rho\in L^{1}\cap L^{4/3}(\mathbb{R}^3),
	\end{equation}
	where $K_{\rm cl}=\frac{3}{4}(6\pi^{2}/q)^{1/3}$. Numerically, the proportion $\sigma_{f} := K_{\rm cl}\tau_{c}^{-1}$ is about $1.092$.
	
	It is well-known (see \cite[Appendix A]{LieOxf-80}) that \eqref{ineq:HLS} has a minimizer $Q\in L^{1}\cap L^{4/3}(\mathbb{R}^3)$ which is unique up to dilations and translations. Such $Q$ can be chosen uniquely to be non-negative symmetric decreasing by rearrangement inequality (see \cite[Chapter 3]{LieLos-01}) and it satisfies
	\begin{equation}\label{cond:LE}
	\sigma_{f}\int_{\mathbb{R}^3}Q(x)^{4/3}{\rm d}x = \int_{\mathbb{R}^3}Q(x){\rm d}x = \frac{1}{2}\iint_{\mathbb{R}^3\times \mathbb{R}^3}\frac{Q(x)Q(y)}{|x-y|}{\rm d}x{\rm d}y = 1.
	\end{equation}
	Moreover, $Q$ solves the Lane--Emden equation of order $3$,
	\begin{equation}\label{eq:massless-neutron-star}
	\frac{4}{3}\sigma_{f} Q(x)^{1/3}-(|\cdot|^{-1}\star Q)(x)+\frac{2}{3} \begin{cases}
	=0 & \text{if } Q(x)>0,\\
	\geq0 & \text{if } Q(x)=0.
	\end{cases}
	\end{equation}
	The Lane--Emden equation goes back to \cite{Lane-70} (see \cite{GadBarHan-80,Thomas-27} for detailed studies). Note that it can be easily seen from \eqref{eq:massless-neutron-star} that $Q$ has compact support (see \cite[Appendix A]{LieOxf-80}).
	
	The critical value $\tau_c$ can be obtained easily from the Chandrasekhar theory, a semi-classical approximation of the full many-body theory. In this effective theory, the ground state energy of a neutron star is given by
	\begin{equation}\label{eq:Ch-energy}
	E_{\tau}^{\rm Ch}(1) := \inf\left\{\mathcal{E}_{\tau}^{\rm Ch}(\rho) : 0\leq\rho\in L^{1}\cap L^{4/3}(\mathbb{R}^{3}),\int_{\mathbb{R}^3} \rho(x){\rm d}x=1\right\},
	\end{equation}
	where the Chandrasekhar functional is
	\begin{equation}\label{eq:Ch-functional}
	\mathcal{E}_{\tau}^{\rm Ch}(\rho) = \int_{\mathbb{R}^3}\frac{q}{(2\pi)^3}\int_{|p|<(6\pi^{2}\rho(x)/q)^{1/3}}\sqrt{|p|^{2}+m^{2}}{\rm d}p{\rm d}x - \frac{\tau}{2} \iint_{\mathbb{R}^3\times \mathbb{R}^3}\frac{\rho(x)\rho(y)}{|x-y|}{\rm d}x{\rm d}y.
	\end{equation}
	It can be easily seen from \eqref{ineq:HLS} that $E_{\tau}^{\rm Ch}(1)>-\infty$ if and only if $\tau\leq\tau_c$.
	
	In the seminal paper \cite{LieYau-87}, Lieb and Yau proved that for any fixed $\tau=\kappa N^{2/3} < \tau_c$, the quantum energy converges to the semi-classical energy
	$$
	\lim_{N\to\infty}\inf\text{spec}\frac{H_{N}}{N} = E_{\tau}^{\rm Ch}(1).
	$$
	See also \cite{LieThi-84} for an earlier related result and \cite{FouLewSol-18} for a recent extension to general interaction potentials. 
	
	In the present paper we are interested in the ground states of the neutron star in the critical regime, when $\tau=\kappa N^{2/3}\nearrow \tau_c$ simultaneously as $N\to \infty$. It turns out that the many-body theory is very complicated to study: the ground state does not exist due to the translation invariance, and even if we consider approximate ground states (in an appropriate sense), then their behavior is very unstable since the system can easily split into many small pieces without lowering much the energy. Therefore, it is reasonable to focus on some effective models where physical properties are easier to observe thanks to non-linear effects. 
	
	In the following, we will focus on the blow-up phenomenon of neutron stars in the Hartree--Fock--Bogoliubov (HFB) theory. This is one of the most important approximation methods in quantum mechanics, and it is a generalization of the traditional Hartree--Fock (HF) theory, taking into account all quasi-free states in Fock space. We refer to Bach, Lieb and Solovej \cite{BacLieSol-94}  for a general discussion on the derivation of the HFB theory  from many-body quantum mechanics (see also Bach, Fr\"{o}hlich and Jonsson in \cite{BacFroJon-09} for a simplification). In this model, we study the HFB energy functional given by
	\begin{equation}\label{HFB-functional}
	\mathcal{E}_{\tau}^{\rm HFB}(\gamma,\alpha) = \tr\sqrt{-\Delta+m^{2}}\gamma - \frac{\kappa}{2}  D(\rho_{\gamma},\rho_{\gamma}) + \frac{\kappa}{2} {\rm Ex}(\gamma) - \frac{\kappa}{2} \iint_{\mathbb{R}^3\times \mathbb{R}^3}\frac{|\alpha(x,y)|^{2}}{|x-y|}{\rm d}x{\rm d}y.
	\end{equation}
	Here we use the  subscript $\tau=\kappa N^{2/3}$ and the shorthand notations
	$$
	D(\rho_{1},\rho_{2}) = \iint_{\mathbb{R}^3\times \mathbb{R}^3}\frac{\rho_{1}(x)\rho_{2}(y)}{|x-y|}{\rm d}x{\rm d}y \quad \text{and} \quad {\rm Ex}(\gamma) = \iint_{\mathbb{R}^3\times \mathbb{R}^3}\frac{|\gamma(x,y)|^{2}}{|x-y|}{\rm d}x{\rm d}y,
	$$
	which we refer to as the direct term and the exchange term, respectively. The density matrix $\gamma$ is a self-adjoint, non-negative operator on $L^{2}(\mathbb{R}^3,\mathbb{C}^q)$ with $\tr\gamma=N$. The pairing density matrix $\alpha$ is a Hilbert--Schmidt operator on $L^{2}(\mathbb{R}^3,\mathbb{C}^q)$, i.e. $\tr\alpha^{*}\alpha < \infty$, and its kernel is a $(2\times 2)$-matrix which is supposed to be anti-symmetric in the sense $\alpha^T=-\alpha$. The set of HFB states is given by
	\begin{equation}\label{ineq:Gamma}
	\mathcal{K}_{\rm HFB} = \left\{(\gamma,\alpha)=(\gamma^{*},-\alpha^T)\in\mathcal{X}_{\rm HFB}:\left(\begin{array}{cc}0 & 0\\0 & 0\end{array}\right) \leq \left(\begin{array}{cc}\gamma & \alpha\\\alpha^{*} & 1-\overline{\gamma}\end{array}\right) \leq \left(\begin{array}{cc}1 & 0\\0 & 1\end{array}\right)\right\},
	\end{equation}
	where the Sobolev-type space $\mathcal{X}_{\rm HFB}$ is defined by
	$$
	\mathcal{X}_{\rm HFB} := \left\{(\gamma,\alpha)\in\mathfrak{S}_1\times\mathfrak{S}_2: \|(1-\Delta)^{1/4}\gamma (1-\Delta)^{1/4}\|_{\mathfrak{S}_1}+\|(1-\Delta)^{1/4}\alpha\|_{\mathfrak{S}_2}<\infty\right\}.
	$$
	The HFB minimization problem then reads
	\begin{equation}\label{eq:HFB-energy}
	E_{\tau}^{\rm HFB}(N) = \inf\big\{\mathcal{E}_{\tau}^{\rm HFB}(\gamma,\alpha):(\gamma,\alpha)\in\mathcal{K}_{\rm HFB},\tr\gamma=N\big\}.
	\end{equation}
	
	In the stable regime, the existence of a minimizer for the variational problem \eqref{eq:HFB-energy} has been proved by Lenzmann and Lewin \cite{LenLew-10}. The HFB energy $E_{\tau}^{\rm HFB}(N)$ is attained for $0<N<N^{\rm HFB}(\kappa)$,  $0<\kappa<\pi/4$ and $m>0$. The finite number $N^{\rm HFB}(\kappa)$ is provided by the asymptotic estimate $N^{\rm HFB}(\kappa) \sim (\tau_{c}/\kappa)^{3/2}$ as $\kappa \to 0$. The authors in \cite{LenLew-10} also proved that, for every $0<\tau<\tau_c$, we have
	\begin{equation}
	\lim_{\substack{N\to\infty \\ \kappa N^{2/3} = \tau}}\frac{E_{\tau}^{\rm HFB}(N)}{N} =  E_{\tau}^{\rm Ch}(1).
	\end{equation}
	Thus the HFB theory captures correctly the leading order of the many-body theory. Actually, this theory is believed to be a much better approximation to the full many-body Schr\"odinger theory than the Chandrasekhar theory.
	
	In this paper, we will focus on the case when $N\to \infty$ and $\tau:=\tau_{N}\nearrow \tau_c$ slowly and show that the HFB minimizers develop a universal blow-up profile given by the Lane--Emden solution. Our main result is
	\begin{theorem}[Blow-Up of HFB Minimizers]\label{thm:behavior}
		Let $q\geq 1$ be given and suppose that $m>0$. Assume that $0 < \tau_{N} = \tau_{c} - N^{-\beta}$ with $0<\beta<1/9$. Then
		\begin{equation} \label{collapse-HFB-energy}
		\frac{E_{\tau_{N}}^{\rm HFB}(N)}{N} =  (\tau_{c}-\tau_{N})^{1/2}(2\Lambda+o(1)_{N\to\infty}),
		\end{equation}
		where
		\begin{equation} \label{Lambda}
		\Lambda = \frac{3}{4}m\sqrt{\frac{1}{K_{\rm cl}}\int_{\mathbb{R}^3}Q(x)^{2/3}{\rm d}x}
		\end{equation}
		and $Q$ is the unique non-negative radial function satisfying \eqref{cond:LE}--\eqref{eq:massless-neutron-star}. Furthermore, if $(\gamma_{N},\alpha_{N})$ is a minimizer of $E_{\tau_{N}}^{\rm HFB}(N)$ and $\rho_{\gamma_{N}}(x)=\gamma_{N}(x,x)$, then there exist a subsequence of $\{\tau_{N}\}$, still denoted by $\{\tau_{N}\}$, and a sequence $\{y_{N}\}\subset\mathbb{R}^3$ such that
		\begin{equation} \label{collapse-HFB-ground-states}
		\lim_{N\to\infty}(\tau_{c}-\tau_{N})^{3/2}\rho_{\gamma_{N}}((\tau_{c}-\tau_{N})^{1/2}N^{1/3}x+y_{N}) = \Lambda^3Q(\Lambda x)
		\end{equation}
		strongly in $L^{r}(\mathbb{R}^3)$ for $1\leq r < 4/3$ and weakly in $L^{4/3}(\mathbb{R}^3)$.
		
	\end{theorem}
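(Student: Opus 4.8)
The plan is to establish \eqref{collapse-HFB-energy} through matching upper and lower bounds and then to read off the profile \eqref{collapse-HFB-ground-states} by a concentration-compactness argument applied to a rescaled density. Write $\varepsilon_N:=\tau_c-\tau_N=N^{-\beta}$ and introduce the length scale $\ell_N:=\varepsilon_N^{1/2}N^{1/3}$ (note $\ell_N\to\infty$, so the collapse takes place in a dilute, large-volume regime where semiclassics is accurate). The guiding principle is that the exchange term, the pairing term, and the gradient/boundary corrections to the kinetic energy are all negligible, so that the energy is governed by the Chandrasekhar functional $\mathcal{E}_{\tau_N}^{\rm Ch}$ of \eqref{eq:Ch-functional}. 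Since the Fermi momentum diverges in the collapse, I would expand the relativistic kinetic density via $\sqrt{|p|^2+m^2}=|p|+\frac{m^2}{2|p|}+O(|p|^{-3})$, which splits $\mathcal{E}_{\tau_N}^{\rm Ch}(\rho)$ into the massless part $K_{\rm cl}\|\rho\|_{L^{4/3}}^{4/3}-\frac{\tau_N}{2}D(\rho,\rho)$ plus a mass correction whose leading term is $\frac{9}{16}\frac{m^2}{K_{\rm cl}}\int\rho^{2/3}$. Writing the massless part as the nonnegative deficit in \eqref{ineq:HLS} at $\tau_c$ plus $\frac{\varepsilon_N}{2}D(\rho,\rho)$, and then minimizing the two surviving positive terms $a\ell^{-1}+b\ell$ over the dilation parameter $\ell$ at the fixed optimal profile $\rho=Q$ (using $D(Q,Q)=2$ from \eqref{cond:LE}), reproduces both the scale $\ell_N$ and the constant $\Lambda$ of \eqref{Lambda}; the minimizing $\ell$ differs from $\ell_N$ by the factor $\Lambda$, which is exactly the dilation that appears in \eqref{collapse-HFB-ground-states}.

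For the upper bound I would feed into $\mathcal{E}_{\tau_N}^{\rm HFB}$ the pairing-free trial pair $\alpha=0$ together with a fermionic $\gamma$ obtained by the standard coherent-state filling of the phase-space region $\{(x,p):|p|<(6\pi^2\rho_*(x)/q)^{1/3}\}$ attached to $\rho_*(x)=N\Lambda^3\ell_N^{-3}Q(\Lambda x/\ell_N)$; the constraints $0\le\gamma\le1$ and $\tr\gamma=N$ are imposed in the usual way, and since $Q$ is bounded with compact support the smearing errors remain of lower order. For such a state the kinetic energy reproduces the Chandrasekhar density up to these errors, the direct term equals $-\frac{\kappa}{2}D(\rho_*,\rho_*)$ exactly, and the exchange contribution satisfies $\frac{\kappa}{2}{\rm Ex}(\gamma)\le C\kappa\int\rho_*^{4/3}$, of size $N^{1/3+\beta/2}=o(N^{1-\beta/2})$. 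Evaluating on $Q$ and inserting the optimal dilation then yields $N^{-1}E_{\tau_N}^{\rm HFB}(N)\le\varepsilon_N^{1/2}(2\Lambda+o(1))$.

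The lower bound is the core of the argument. For an arbitrary $(\gamma,\alpha)\in\mathcal{K}_{\rm HFB}$ with $\tr\gamma=N$, the operator inequality $\alpha\alpha^*\le\gamma(1-\gamma)\le\gamma$ encoded in \eqref{ineq:Gamma} controls the dangerous negative pairing term by the exchange term, and the prefactor $\kappa=\tau_N N^{-2/3}$ renders $\frac{\kappa}{2}({\rm Ex}(\gamma)+\iint|x-y|^{-1}|\alpha(x,y)|^2\,{\rm d}x\,{\rm d}y)$ of lower order than $N\varepsilon_N^{1/2}$. One then invokes a sharp semiclassical (Lieb--Thirring type) lower bound $\tr\sqrt{-\Delta+m^2}\,\gamma\ge K_{\rm cl}\int\rho_\gamma^{4/3}+\frac{9}{16}\frac{m^2}{K_{\rm cl}}\int\rho_\gamma^{2/3}-(\text{remainder})$, and combines it with \eqref{ineq:HLS} applied to $\rho_\gamma$ and the dilation optimization of the first paragraph. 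The main obstacle is precisely the remainder control: the two leading pieces $K_{\rm cl}\int\rho_\gamma^{4/3}$ and $\frac{\tau_c}{2}D(\rho_\gamma,\rho_\gamma)$ are each of order $N^{1+\beta/2}$ and must cancel to leave only the binding energy of order $N^{1-\beta/2}$, so every approximation has to be shown to be $o(N^{1-\beta/2})$, i.e.\ of relative order $o(N^{-\beta})$. Tracking these competing exponents is what forces $0<\beta<1/9$, and I expect the dominant restriction to arise from the remainder in the semiclassical kinetic bound.

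Finally, for \eqref{collapse-HFB-ground-states} I would use that a minimizer $(\gamma_N,\alpha_N)$ saturates the whole chain of inequalities. Setting $u_N(x):=\varepsilon_N^{3/2}\rho_{\gamma_N}(\ell_N x+y_N)$ produces a unit-mass sequence with, by the energy bounds, uniformly bounded $L^{4/3}$ norm, so $u_N\rightharpoonup u$ weakly in $L^{4/3}$ along a subsequence; matching the two bounds forces the deficit in \eqref{ineq:HLS} at $\tau_c$ for $u_N$ to vanish and fixes the dilation through the mass term, so that the only possible nonzero limit is $\Lambda^3Q(\Lambda\cdot)$ after centering at the concentration point $y_N$. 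A concentration-compactness analysis excludes vanishing (otherwise the energy could not attain $2\Lambda\varepsilon_N^{1/2}$) and dichotomy (strict binding follows from the sharp constant in \eqref{ineq:HLS}, since splitting the mass strictly raises the energy), which gives tightness and hence strong $L^1$ convergence; interpolating with the $L^{4/3}$ bound upgrades this to strong convergence in $L^r$ for every $1\le r<4/3$, while only weak convergence survives at the critical exponent $4/3$. This yields \eqref{collapse-HFB-ground-states} and completes the proof.
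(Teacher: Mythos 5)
Your overall architecture (matching energy bounds, rescaling the density, concentration compactness, fixing the dilation through the mass correction) is the same as the paper's, but there is one genuine gap at the heart of the compactness step. You exclude dichotomy by asserting that ``strict binding follows from the sharp constant in \eqref{ineq:HLS}, since splitting the mass strictly raises the energy.'' In this problem that is false: after rescaling to the collapse length, the effective limiting functional is the \emph{massless} Chandrasekhar functional at the critical coupling, and $E^{\rm Ch}_{\tau_c}(\nu)\big|_{m=0}=\nu\,E^{\rm Ch}_{\tau_c\nu^{2/3}}(1)\big|_{m=0}=0$ for every $0<\nu\leq 1$, so the binding inequality is an \emph{equality} and splitting the mass costs nothing at leading order. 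The paper's argument is different and this is precisely its main point: if dichotomy occurred, the localized piece $w^{(1)}$ of mass $\nu\in(0,1)$ would have to be an actual \emph{minimizer} of $E^{\rm Ch}_{\tau_c}(\nu)\big|_{m=0}=0$, which cannot exist because \eqref{ineq:HLS} gives $\mathcal{E}^{\rm Ch}_{\tau_c}(\rho)\big|_{m=0}\geq K_{\rm cl}\|\rho\|_{L^{4/3}}^{4/3}(1-\nu^{2/3})>0$ for $\rho\neq 0$ with $\int\rho=\nu<1$. Non-existence of subcritical-mass minimizers, not strict binding, is what rules out dichotomy; as written, your argument does not close. (Your exclusion of vanishing is also only a heuristic; the paper makes it quantitative by a Feynman--Hellmann-type comparison of $E^{\rm HFB}_{\tau_N'}(N)$ and $E^{\rm HFB}_{\tau_N}(N)$ showing $D(w_N,w_N)\geq K>0$, using the two-sided bound $M_1(\tau_c-\tau)^{1/2}\leq E^{\rm Ch}_{\tau}(1)\leq M_2(\tau_c-\tau)^{1/2}$.)

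A second, smaller issue: your lower bound invokes ``a sharp semiclassical (Lieb--Thirring type) lower bound $\tr\sqrt{-\Delta+m^2}\,\gamma\geq K_{\rm cl}\int\rho_\gamma^{4/3}+\cdots$.'' A pointwise inequality of this form with the sharp constant $K_{\rm cl}$ and the unsmeared density $\rho_\gamma$ is exactly the open conjecture of Lieb--Yau; it is the reason the theorem only claims weak $L^{4/3}$ convergence. The paper instead uses the Lieb--Yau coherent-state bound $\tr\sqrt{-\Delta+m^{2}}\gamma\geq\int j_{m}(\rho_{\gamma}\star|g|^{2})-\langle g,\sqrt{-\Delta}g\rangle\tr\gamma$ (so the sharp constant only ever acts on the \emph{smeared} density, with the smearing error controlled via Young's inequality and Daubechies' bound), together with a separate Weyl-law argument that recovers the sharp constant only in the weak limit along the rescaled sequence. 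Your parenthetical ``(remainder)'' does not resolve this; you need to say which inequality you are actually using, since the na\"ive sharp one is not known to hold.
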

	
	\begin{remark}
		\begin{itemize}
			\item Attaining the $L^{4/3}(\mathbb{R}^3)$ convergence in \eqref{collapse-HFB-ground-states} would require, at least with our method, to prove the Lieb--Thirring inequality with the sharp constant as conjectured in \cite{LieYau-87} (see \cite{Daubechies-83} and \cite[Chapter 4]{LieSei-10} for thorough discussions).

			\item The contribution of the pairing term in \eqref{HFB-functional} is coupled by the small parameter $\kappa=\mathcal{O}(N^{-2/3})$. Therefore it does not show up in the leading order of the blow-up profile.

			\item Since the limit in \eqref{collapse-HFB-ground-states} is unique, we expect that the density of the HFB minimizer is unique, at least when $N$ is sufficient large. Probably this can be proved using techniques in the ground state problem of non-linear Schr\"odinger functionals (see e.g. \cite{AscFroGraSchTro-02,Maeda-10,FraLen-13,FraLenSil-16}), but it seems non-trivial. We hope to come back this issue in the future.
			\end{itemize}
\end{remark}

	The above result is a continuation of our work in \cite{Nguyen-19f} on the blow-up profile of neutron stars in the Chandrasekhar theory \eqref{eq:Ch-energy}, which is purely semi-classical (see also \cite{Nguyen-19b,Nguyen-17a,Nguyen-17b,GuoZen-17,YanYan-17} for discussions on the blow-up profile of boson stars). The HFB theory is believed to be much more precise than the Chandrasekhar theory, and the analysis in this case is also significantly more difficult. The proof of \cite[Theorem 2]{Nguyen-19f} is based on a detailed analysis of the Euler--Lagrange equation associated to the minimizers for $E_{\tau}^{\rm Ch}(1)$ when $\tau \nearrow \tau_{c}$. In this paper, our proof of Theorem \ref{thm:behavior} is based on the concentration compactness method \cite{Lions-84a,Lions-84b}. In contrast to the classical dichotomy argument, the relative compactness of the sequence of the densities of the HFB minimizers is not a consequence of the strict binding inequality, but it comes from the non-existence of minimizers in the variational problem $E^{\rm Ch}_{\tau_c}(\nu)|_{m=0}=0$ with $0<\nu<1$. By the same method, we can also extend the blow-up result in the Chandrasekhar theory to the general approximate Chandrasekhar minimizer. Finally, we remark that the blow-up phenomenon of neutron stars has also been studied in the time-dependent setting (see \cite{FroLen-07f,HaiSch-09,Hainzl-10,HaiLenLewSch-10} for rigorous results). This problem, however, is different from the ground state problem that we consider in the present paper.
	
\medskip

\textbf{Organization of the paper.} In Section \ref{sec:energy estimate} we establish some estimates for the HFB energy via the Chandrasekhar energy, and a moment estimate for the densities of the HFB minimizers. In Section \ref{sec:behavior}, we prove Theorem \ref{thm:behavior} which gives the blow-up profile of minimizers for the HFB minimization problem \eqref{eq:HFB-energy}.

	\section{Energy Estimates} \label{sec:energy estimate}

	Since the full many-body Schr\"odinger theory of neutron stars is very complicated, approximate but simpler theories are often used to study the stellar collapse for neutron stars. The simplest approximate theory is the semi-classical Chandrasekhar theory \eqref{eq:Ch-energy}. This has been rigorously derived from many-body quantum mechanics by Lieb and Yau in \cite{LieYau-87} (see also \cite{LieThi-84}). In this section, we revisit the blow-up phenomenon in the Chandrasekhar theory. Note that the kinetic energy functional in \eqref{eq:Ch-functional} can be calculated explicitly as follows
	\begin{align}
	j_{m}(\rho) :& = \frac{q}{(2\pi)^3}\int_{|p|<(6\pi^{2}\rho/q)^{1/3}}\sqrt{|p|^{2}+m^{2}}{\rm d}p \label{kinetic}\\
	& = \frac{q}{16\pi^{2}}\left[\eta(2\eta^{2}+m^{2})\sqrt{\eta^{2}+m^{2}}-m^4\ln\left(\frac{\eta+\sqrt{\eta^{2}+m^{2}}}{m}\right)\right],\quad \eta=\left(\frac{6\pi^{2}\rho}{q}\right)^{1/3}. \nonumber
	\end{align}

	For the reader's convenience, we recall the following results on the existence and uniqueness of the Chandrasekhar minimizer (see \cite[Theorem 3]{LieYau-87}) and the blow-up profile of neutron stars in the Chandrasekhar theory (see \cite[Theorem 2]{Nguyen-19f}).

\begin{theorem}[Existence of the Chandrasekhar Minimizer] 
	\label{thm:existence-Ch-minimizer} 	Let $q\geq 1$ be given and suppose that $m>0$. Then the variational problem $E^{\rm Ch}_{\tau}(1)$ in \eqref{eq:Ch-energy} has the following properties.
	\begin{itemize}
		\item[(i)] If $0< \tau<\tau_{c}$ then $E^{\rm Ch}_{\tau}(1) \geq 0$ and it has a unique minimizer (up to translation). The minimizer can be chosen to be radially symmetric decreasing,
		\item[(ii)] If $\tau=\tau_{c}$ then $E^{\rm Ch}_{\tau}(1)=0$ but it has no minimizer,
		\item[(iii)] If $\tau>\tau_{c}$ then $E^{\rm Ch}_{\tau}(1)=-\infty$.
	\end{itemize}
\end{theorem}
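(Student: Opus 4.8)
The plan is to treat the common lower bound first and then build case-specific trial states; all three regimes rest on two elementary pointwise estimates on the kinetic density \eqref{kinetic}. Writing $j_0(\rho):=j_m(\rho)\big|_{m=0}=K_{\rm cl}\rho^{4/3}$, the inequalities $\sqrt{|p|^2+m^2}\ge|p|$ and $\sqrt{|p|^2+m^2}\ge m$ give $j_m(\rho)\ge j_0(\rho)$ and $j_m(\rho)\ge m\rho$ pointwise, hence $j_m(\rho)\ge\theta j_0(\rho)+(1-\theta)m\rho$ for every $\theta\in[0,1]$. Inserting this together with \eqref{ineq:HLS} (at $\|\rho\|_{L^1}=1$) yields
\begin{equation*}
\mathcal{E}_{\tau}^{\rm Ch}(\rho)\;\ge\;\theta\int_{\mathbb{R}^3}j_0(\rho)+(1-\theta)m-\frac{\tau}{2}D(\rho,\rho)\;\ge\;\Big(\theta-\frac{\tau}{\tau_c}\Big)\int_{\mathbb{R}^3}j_0(\rho)+(1-\theta)m .
\end{equation*}
Choosing $\theta=\tau/\tau_c$ gives $\mathcal{E}_{\tau}^{\rm Ch}(\rho)\ge(1-\tau/\tau_c)m$, so $E^{\rm Ch}_{\tau}(1)\ge0$ for $\tau\le\tau_c$ and the strict bound $E^{\rm Ch}_{\tau}(1)\ge(1-\tau/\tau_c)m>0$ for $\tau<\tau_c$; keeping only $j_0$ gives in addition the coercivity estimate $\mathcal{E}_{\tau}^{\rm Ch}(\rho)\ge(1-\tau/\tau_c)K_{\rm cl}\|\rho\|_{L^{4/3}}^{4/3}$.

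For the matching upper bounds I would use the mass-preserving dilations $\rho_\ell(x)=\ell^3\rho_0(\ell x)$, for which $D(\rho_\ell,\rho_\ell)=\ell\,D(\rho_0,\rho_0)$ exactly while $\int_{\mathbb{R}^3}j_m(\rho_\ell)=\ell\int_{\mathbb{R}^3}j_0(\rho_0)+o(\ell)$ as $\ell\to\infty$ (the density diverges pointwise, so $j_m\sim j_0$). Taking $\rho_0=Q$, for which \eqref{cond:LE} gives $\int_{\mathbb{R}^3}j_0(Q)-\tfrac{\tau}{2}D(Q,Q)=\tau_c-\tau$, the regime $\tau>\tau_c$ makes this bracket negative and forces $\mathcal{E}_{\tau}^{\rm Ch}(Q_\ell)\to-\infty$, proving (iii). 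At $\tau=\tau_c$ the bracket vanishes, and the nonnegative relativistic remainder satisfies $\int_{\mathbb{R}^3}j_m(Q_\ell)-\ell\int_{\mathbb{R}^3}j_0(Q)\sim c_m\,\ell^{-1}\int_{\mathbb{R}^3}Q^{2/3}\to0$, so $\mathcal{E}_{\tau_c}^{\rm Ch}(Q_\ell)\to0^+$; with the lower bound this yields $E^{\rm Ch}_{\tau_c}(1)=0$. Nonexistence at $\tau=\tau_c$ then follows from rigidity: a minimizer would satisfy $\int_{\mathbb{R}^3}j_m(\rho)=\tfrac{\tau_c}{2}D(\rho,\rho)\le\int_{\mathbb{R}^3}j_0(\rho)$ by \eqref{ineq:HLS}, contradicting the strict pointwise inequality $j_m(\rho)>j_0(\rho)$ on $\{\rho>0\}$.

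The heart of (i) is existence, which I would obtain by concentration--compactness \`a la Lions applied to a minimizing sequence $\rho_n$ with $\int_{\mathbb{R}^3}\rho_n=1$; the coercivity estimate bounds $\rho_n$ in $L^{1}\cap L^{4/3}(\mathbb{R}^3)$. To rule out vanishing I would produce, from the small-$\ell$ end of the same dilation family, a trial state with $\mathcal{E}_{\tau}^{\rm Ch}(Q_\ell)=m-\tau\ell+O(\ell^2)<m$ (using the non-relativistic expansion $j_m(\rho)=m\rho+c_{\rm TF}\rho^{5/3}+O(\rho^{7/3})$), so that $E^{\rm Ch}_{\tau}(1)<m$; on the other hand vanishing forces $\|\rho_n\|_{L^{6/5}}\to0$ and hence $D(\rho_n,\rho_n)\to0$, giving $\liminf\mathcal{E}_{\tau}^{\rm Ch}(\rho_n)\ge m$, a contradiction. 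To rule out dichotomy I would establish the strict binding inequality $E^{\rm Ch}_{\tau}(1)<E^{\rm Ch}_{\tau}(\nu)+E^{\rm Ch}_{\tau}(1-\nu)$ for $0<\nu<1$, where $E^{\rm Ch}_{\tau}(\nu)$ denotes the same problem at mass $\nu$. Compactness of a translated subsequence then follows, and weak lower semicontinuity of the convex map $\rho\mapsto\int_{\mathbb{R}^3}j_m(\rho)$ together with the strong convergence of the density promotes the weak limit to a genuine minimizer of $E^{\rm Ch}_{\tau}(1)$.

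Finally, radial symmetry follows by rearrangement: $\int_{\mathbb{R}^3}j_m(\rho)$ is invariant under symmetric decreasing rearrangement while $D(\rho,\rho)$ strictly increases (Riesz), so the minimizer must equal its own rearrangement up to translation; uniqueness follows from the Euler--Lagrange equation, a Lane--Emden--type relation $\rho=(j_m')^{-1}\big(\big[\tau\,|\cdot|^{-1}\star\rho-\mu\big]_+\big)$ in the spirit of \eqref{eq:massless-neutron-star}, whose positive radial solution with prescribed mass is unique by an ODE/shooting analysis. The main obstacle is the strict binding inequality: because the functional couples a strictly convex but \emph{inhomogeneous} relativistic kinetic term to the concave attraction $-\tfrac{\tau}{2}D(\rho,\rho)$, no single dilation relates $E^{\rm Ch}_{\tau}(\nu)$ to $E^{\rm Ch}_{\tau}(1)$, and one must exploit the sub-leading non-relativistic part of $j_m$ — where the strict convexity of $\nu\mapsto\nu^{7/3}$ makes it strictly advantageous to keep all the mass in a single cluster — to close the argument in the non-reflexive space $L^1$.
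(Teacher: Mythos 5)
First, be aware that the paper does not prove Theorem \ref{thm:existence-Ch-minimizer}: it is recalled verbatim from Lieb and Yau \cite[Theorem 3]{LieYau-87}, so there is no in-paper proof to measure your argument against. On its own terms, a good half of your proposal is correct and complete. The interpolation $j_m(\rho)\ge\theta K_{\rm cl}\rho^{4/3}+(1-\theta)m\rho$ with $\theta=\tau/\tau_c$, combined with \eqref{ineq:HLS}, does give $E^{\rm Ch}_\tau(1)\ge(1-\tau/\tau_c)m\ge0$; the dilations $Q_\ell$, for which \eqref{cond:LE} gives $K_{\rm cl}\int_{\mathbb{R}^3}Q^{4/3}=\tau_c$ and $\tfrac{\tau}{2}D(Q,Q)=\tau$, settle (iii) and yield $E^{\rm Ch}_{\tau_c}(1)=0$ (the remainder $\int_{\mathbb{R}^3}(j_m-j_0)(Q_\ell)\le C m^2\ell^{-1}\int_{\mathbb{R}^3}Q^{2/3}$ is indeed $o(1)$ since $Q$ is bounded with compact support); and the rigidity argument $\int_{\mathbb{R}^3} j_m(\rho)>\int_{\mathbb{R}^3} j_0(\rho)\ge\tfrac{\tau_c}{2}D(\rho,\rho)$ correctly excludes a minimizer at $\tau=\tau_c$. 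The non-vanishing step via $\mathcal{E}^{\rm Ch}_\tau(Q_\ell)\le m-\tau\ell+C\ell^2<m$ is also sound.

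For part (i), however, you have left open exactly the two steps that carry the weight of the theorem. (a) The strict binding inequality $E^{\rm Ch}_\tau(1)<E^{\rm Ch}_\tau(\nu)+E^{\rm Ch}_\tau(1-\nu)$ is asserted, not proved; your heuristic about the strict convexity of $\nu\mapsto\nu^{7/3}$ describes the non-relativistic scaling limit, not the inhomogeneous functional at fixed $\tau$ near $\tau_c$, and the natural two-cluster trial state only gives the non-strict inequality unless one already knows that near-minimizers of the sub-problems do not spread out --- which is circular. The efficient route (and essentially that of \cite{LieYau-87}) is to rearrange first: the kinetic term is invariant and $D$ strictly increases under symmetric decreasing rearrangement, so one may take a radially decreasing minimizing sequence; by Helly it converges pointwise, the only possible defect is mass escaping to infinity at vanishing density, such mass carries energy at least $m(1-\nu)+o(1)$ by $j_m\ge m\rho$ and contributes $o(1)$ to $D$, and therefore one only needs the much weaker strict inequality $E^{\rm Ch}_\tau(1)<E^{\rm Ch}_\tau(\nu)+m(1-\nu)$, which a one-line trial state provides (a diffuse cloud of mass $1-\nu$ at distance $R$ gains cross-attraction of order $\tau\nu(1-\nu)/R$ against a kinetic excess of order $R^{-2}$). (b) Uniqueness is delegated to ``an ODE/shooting analysis'' of the Euler--Lagrange equation; for the relativistic kinetic term this is precisely the delicate part of \cite{LieYau-87} (the relation between central density and total mass is not obviously monotone), so as written this is a citation-shaped placeholder rather than an argument. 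Since the theorem is in any case quoted from \cite[Theorem 3]{LieYau-87}, either cite these two steps explicitly or supply them; the proposal currently does neither.
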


\begin{remark}
	The Chandrasekhar minimizer $\rho$ satisfies the Euler--Lagrange equation, for some Lagrange multiplier $\mu$,
	$$
	j_{m}'(\rho(x)) = (\eta(x)^{2}+m^{2})^{1/2} = [\tau|x|^{-1}\star\rho-\mu]_{+}  ,
	$$
	where $[f(x)]_{+}=\max\{f(x),0\}$ and $\eta=(6\pi^{2}\rho/q)^{1/3}$. This equation is equivalent to	the Newtonian limit of the Tolman--Oppenheimer--Volkoff equation \cite{Weinberg-72,Straumann-12}.
\end{remark}

\begin{theorem}[Blow-Up of the Chandrasekhar Minimizer] \label{thm:behavior-Ch-minimier} 
	Let $q\geq 1$ be given and suppose that $m>0$. Let $\rho_{\tau}$ be the unique minimizer (up to translation) of $E^{\rm Ch}_{\tau}(1)$ for $0< \tau<\tau_{c}$. Then for every sequence $\{\tau_{n}\}$ with $\tau_{n}\nearrow \tau_{c}$ as $n\to\infty$, we have
	\begin{equation}\label{asymptotic:Ch-energy}
	E^{\rm Ch}_{\tau_n}(1) =  (\tau_{c}-\tau_{n})^{1/2}(2\Lambda+o(1)_{n\to\infty}),
	\end{equation}
	where $\Lambda$ is determined as in \eqref{Lambda}. Furthermore, there exist a subsequence of $\{\tau_{n}\}$, still denoted by $\{\tau_{n}\}$, and a sequence $\{y_{n}\}\subset \mathbb{R}^3$ such that
	\[
	\lim_{n\to\infty}(\tau_{c}-\tau_{n})^{3/2}\rho_{\tau_{n}}((\tau_{c}-\tau_{n})^{1/2}x+y_{n})=\Lambda^{3}Q\left(\Lambda x\right)
	\]
	strongly in $L^{1}\cap L^{4/3}(\mathbb{R}^3)$. Here $Q$ is the unique non-negative radial function satisfying \eqref{cond:LE}--\eqref{eq:massless-neutron-star}.
\end{theorem}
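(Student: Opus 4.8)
\emph{Proof strategy.} The whole argument rests on the fact that at $\tau=\tau_{c}$ the leading part of \eqref{eq:Ch-functional} is scale invariant, so that the mass $m$ is the only mechanism fixing a length scale as $\tau_{N}\nearrow\tau_{c}$. I write $\epsilon_{n}:=(\tau_{c}-\tau_{n})^{1/2}$ and begin by recording the high-density expansion of the kinetic density \eqref{kinetic},
\[
j_{m}(t)=K_{\rm cl}\,t^{4/3}+C_{m}\,t^{2/3}-\tfrac{q m^{4}}{16\pi^{2}}\log\!\bigl(t^{1/3}\bigr)+O(1)\qquad(t\to\infty),
\]
where a direct computation from \eqref{kinetic} gives $C_{m}=\tfrac{9m^{2}}{16K_{\rm cl}}$, so that $C_{m}\int_{\mathbb R^{3}}Q^{2/3}=\Lambda^{2}$ with $\Lambda$ as in \eqref{Lambda}. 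For the upper bound I would use the trial density $\rho_{\ell}(x)=\ell^{-3}Q(\ell^{-1}x)$. Using $K_{\rm cl}\int Q^{4/3}=\tau_{c}$ and $\tfrac12 D(Q,Q)=1$ from \eqref{cond:LE}, a scaling computation (the logarithmic and edge corrections contributing only $O(\ell^{3}\log(1/\ell))$) yields
\[
\mathcal{E}^{\rm Ch}_{\tau_{n}}(\rho_{\ell})=\ell^{-1}\epsilon_{n}^{2}+\ell\,C_{m}\!\int_{\mathbb R^{3}}Q^{2/3}+o(\ell).
\]
Optimizing over $\ell$ (minimum at $\ell=\epsilon_{n}/\Lambda$) and recalling $C_{m}\int Q^{2/3}=\Lambda^{2}$ gives $E^{\rm Ch}_{\tau_{n}}(1)\le 2\Lambda\epsilon_{n}(1+o(1))$.

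For the lower bound and the a priori control, I would split the energy of the (radial decreasing) minimizer $\rho_{n}$, normalized by $\int\rho_{n}=1$, into three manifestly nonnegative pieces using $\sqrt{|p|^{2}+m^{2}}\ge|p|$ and the sharp Hardy--Littlewood--Sobolev inequality \eqref{ineq:HLS}:
\[
E^{\rm Ch}_{\tau_{n}}(1)=\underbrace{\Bigl(K_{\rm cl}\!\int_{\mathbb R^{3}}\!\rho_{n}^{4/3}-\tfrac{\tau_{c}}{2}D(\rho_{n},\rho_{n})\Bigr)}_{=:\,\delta_{n}\ \ge\ 0}+\tfrac{\epsilon_{n}^{2}}{2}D(\rho_{n},\rho_{n})+\underbrace{\Bigl(\int_{\mathbb R^{3}}\!j_{m}(\rho_{n})-K_{\rm cl}\!\int_{\mathbb R^{3}}\!\rho_{n}^{4/3}\Bigr)}_{=:\,M_{n}\ \ge\ 0}.
\]
The upper bound then forces $\delta_{n}$, $\tfrac{\epsilon_{n}^{2}}{2}D(\rho_{n},\rho_{n})$ and $M_{n}$ to be $O(\epsilon_{n})$; in particular $\int\rho_{n}^{4/3}=O(\epsilon_{n}^{-1})$, $D(\rho_{n},\rho_{n})=O(\epsilon_{n}^{-1})$, and the \emph{rescaled} HLS-deficit tends to $0$.

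The delicate point, which I expect to be the main obstacle, is that $j_{m}$ is \emph{not} homogeneous: near the edge of the support $j_{m}(t)-K_{\rm cl}t^{4/3}\sim m t\ll t^{2/3}$, so $M_{n}$ does not by itself control $\int\rho_{n}^{2/3}$, and a long low-density tail could inflate the latter. I would therefore extract from the Euler--Lagrange equation in the Remark following Theorem \ref{thm:existence-Ch-minimizer} the structural bounds: $\rho_{n}$ has compact support of radius $R_{n}=O(\epsilon_{n})$, maximal density $O(\epsilon_{n}^{-3})$, and hence $\int_{\mathbb R^{3}}\rho_{n}^{2/3}=O(\epsilon_{n})$. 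Controlling the crossover of $j_{m}$ from the ultrarelativistic to the rest-mass regime, ruling out slowly decaying tails, and thereby justifying the replacement of $\int j_{m}(\rho_{n})$ by $K_{\rm cl}\int\rho_{n}^{4/3}+C_{m}\int\rho_{n}^{2/3}$ up to $o(\epsilon_{n})$, is precisely where the argument departs from the homogeneous, purely power-law situation; this is the hard part.

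With these bounds in hand the conclusion follows by a concentration-compactness argument at the natural scale. Rescaling $v_{n}(x):=\epsilon_{n}^{3}\rho_{n}(\epsilon_{n}x+y_{n})$ produces a radial decreasing sequence of mass one that is uniformly bounded (since $v_{n}(0)=\epsilon_{n}^{3}\rho_{n}(y_{n})=O(1)$), supported in a fixed ball, bounded in $L^{4/3}$, and whose HLS-deficit $\delta(v_{n})=\epsilon_{n}\delta_{n}\to0$; thus $v_{n}$ is a minimizing sequence for \eqref{ineq:HLS}. Compact support in a fixed ball gives tightness, so along a subsequence $v_{n}\to V$ almost everywhere and in $L^{1}$, with $V$ an optimizer of \eqref{ineq:HLS} of mass one by weak lower semicontinuity, whence $V=\lambda^{3}Q(\lambda\,\cdot)$ by the uniqueness in \eqref{cond:LE}. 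Passing to the limit in the decomposition (Fatou for $\int v_{n}^{2/3}$ and weak lower semicontinuity of $D$) yields
\[
\epsilon_{n}^{-1}E^{\rm Ch}_{\tau_{n}}(1)\ \ge\ \tfrac12 D(V,V)+C_{m}\!\int_{\mathbb R^{3}}\!V^{2/3}+o(1)=g(\lambda)+o(1),\qquad g(\lambda):=\lambda+C_{m}\lambda^{-1}\!\int_{\mathbb R^{3}}\!Q^{2/3},
\]
and since $\min_{\lambda>0}g=g(\Lambda)=2\Lambda$, comparison with the upper bound forces $\lambda=\Lambda$, giving \eqref{asymptotic:Ch-energy} and $V=\Lambda^{3}Q(\Lambda\,\cdot)$. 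Finally, equality in the limit turns the lower bounds into convergences: $D(v_{n},v_{n})\to D(V,V)$ together with $\delta(v_{n})\to0$ gives $\|v_{n}\|_{L^{4/3}}\to\|V\|_{L^{4/3}}$, which upgrades weak to strong $L^{4/3}$ convergence by uniform convexity, while tightness gives strong $L^{1}$ convergence, establishing the claimed profile in $L^{1}\cap L^{4/3}(\mathbb R^{3})$.
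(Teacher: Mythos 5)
Your skeleton and all your constants check out: $C_{m}=\tfrac{9m^{2}}{16K_{\rm cl}}$ so that $C_{m}\int_{\mathbb R^{3}}Q^{2/3}=\Lambda^{2}$, the trial-state computation giving $2\Lambda\epsilon_{n}$, the three-term nonnegative decomposition with the resulting coercivity bounds, and the matching argument $\min_{\lambda>0}g=g(\Lambda)=2\Lambda$, which is exactly the computation leading to \eqref{eq:b} in Section \ref{sec:behavior}. Your route also matches the actual proof in spirit: the paper does not prove Theorem \ref{thm:behavior-Ch-minimier} itself but recalls it from \cite{Nguyen-19f}, whose argument it describes as a detailed analysis of the Euler--Lagrange equation for the exact minimizer --- precisely the a priori bounds you invoke. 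But that is also where your proposal has a genuine gap: the two structural bounds, support radius $O(\epsilon_{n})$ and $\|\rho_{n}\|_{L^{\infty}}=O(\epsilon_{n}^{-3})$, are asserted rather than proved, and your entire compactness step rests on them. In particular, ``uniformly bounded (since $v_{n}(0)=\epsilon_{n}^{3}\rho_{n}(y_{n})=O(1)$)'' is circular: that \emph{is} the maximal-density bound to be established. Without these bounds nothing in your write-up prevents loss of mass: a radial decreasing sequence with $\|v_{n}\|_{L^{1}}=1$ and $\|v_{n}\|_{L^{4/3}}\le C$ only satisfies $v_{n}(x)\le C|x|^{-3}$, which is not integrable at infinity. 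Worse, your final matching inequality cannot rule this out by itself: if the limit were a spread sub-mass bubble $V=\nu a^{3}Q(a\,\cdot)$ with $\nu<1$, then $\tfrac12 D(V,V)+C_{m}\int V^{2/3}=\nu^{2}a+\nu^{2/3}\Lambda^{2}a^{-1}\ge 2\nu^{4/3}\Lambda<2\Lambda$, perfectly consistent with the upper bound. Excluding dichotomy therefore needs a genuine argument; the paper does it via the non-existence of minimizers for $E^{\rm Ch}_{\tau_{c}}(\nu)|_{m=0}=0$ with $0<\nu<1$, as in \eqref{non-dichotomy}. Note also that the sharp lower bound in \eqref{asymptotic:Ch-energy} itself depends on this step, so the gap affects both conclusions of the theorem.

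Two further observations on how the gap relates to the paper's machinery. First, you overestimate what the Euler--Lagrange bounds are needed for: the two-sided replacement of $\int j_{m}(\rho_{n})$ by $K_{\rm cl}\int\rho_{n}^{4/3}+C_{m}\int\rho_{n}^{2/3}$ up to $o(\epsilon_{n})$ is never required. For the upper bound, integrating $\sqrt{|p|^{2}+m^{2}}\le|p|+m^{2}/(2|p|)$ over the Fermi ball in \eqref{kinetic} yields the \emph{global} pointwise inequality $j_{m}(t)\le K_{\rm cl}t^{4/3}+C_{m}t^{2/3}$ for all $t>0$, so the trial-state estimate needs no edge or logarithmic corrections; for the lower bound, your own Fatou step suffices, since $j_{m}(t)-K_{\rm cl}t^{4/3}\ge0$ and $\epsilon_{n}^{2}\bigl(j_{m}(\epsilon_{n}^{-3}s)-K_{\rm cl}(\epsilon_{n}^{-3}s)^{4/3}\bigr)\to C_{m}s^{2/3}$ pointwise --- this is exactly how Section \ref{sec:behavior} proceeds via \eqref{ineq:operator} and Fatou, with no upper bound on $\int\rho_{n}^{2/3}$ ever used. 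Hence the only load-bearing role of the ``hard part'' in your scheme is tightness and conservation of mass in the limit. Second, the paper indicates this can be obtained without the Euler--Lagrange equation at all: Theorem \ref{thm:behavior-Ch-approximate}, which covers the exact minimizer as a special case, is proved by concentration compactness --- no vanishing via the two-sided energy bound \eqref{ineq:direct-term-2} together with an energy comparison at the shifted coupling $\tau_{n}'=\tau_{n}-\delta(\tau_{c}-\tau_{n})$, and no dichotomy as above. To complete your proof you must either carry out the Euler--Lagrange analysis of \cite{Nguyen-19f}, where bounds of the type you need are established, or substitute the concentration-compactness compactness step of Section \ref{sec:behavior}; everything else in your proposal then goes through.
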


	In \cite{Nguyen-19f}, the proof of Theorem \ref{thm:behavior-Ch-minimier}  is based on a detailed analysis of the Euler--Lagrange equation for the exact Chandrasekhar minimizer. This can be extended to the general approximate Chandrasekhar minimizer. The following result can be proved by adapting our arguments in the next section and the arguments in \cite{Nguyen-19f}.

\begin{theorem}[Blow-Up of the Approximate Chandrasekhar Minimizers] \label{thm:behavior-Ch-approximate}
	Let $q\geq 1$ be given and suppose that $m>0$. Let $\tau_{n} \nearrow \tau_c$ as $n\to\infty$ and let $\rho_{n} \in L^{1}\cap L^{4/3}(\mathbb{R}^3)$ be a sequence of non-negative functions such that $\int_{\mathbb{R}^3}\rho_{n}(x){\rm d}x = 1$ and 
	\begin{equation}\label{eq:Ch-energy-approximate-1}
	\lim_{n\to\infty}\frac{\mathcal{E}_{\tau_{n}}^{\rm Ch}(\rho_{n})}{E_{\tau_{n}}^{\rm Ch}(1)} = 1.
	\end{equation}
	Then there exist a subsequence of $\{\tau_{n}\}$, still denoted by $\{\tau_{n}\}$, and a sequence $\{y_{n}\}\subset \mathbb{R}^3$ such that  
	$$
	\lim_{n\to\infty}(\tau_{c}-\tau_{n})^{3/2}\rho_{\tau_{n}}((\tau_{c}-\tau_{n})^{1/2}x+y_{n}) = \Lambda^{3}Q(\Lambda x)
	$$
	strongly in $L^{1}\cap L^{4/3}(\mathbb{R}^{3})$. Here $\Lambda$ is determined as in \eqref{Lambda} and $Q$ is the unique non-negative radial function satisfying \eqref{cond:LE}--\eqref{eq:massless-neutron-star}.
\end{theorem}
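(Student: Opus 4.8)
The plan is to run the concentration--compactness method \cite{Lions-84a,Lions-84b} on a rescaled version of the approximate minimisers, using the sharp energy asymptotics \eqref{asymptotic:Ch-energy} of Theorem \ref{thm:behavior-Ch-minimier} as the only external input. Set $\ell_n=(\tau_c-\tau_n)^{1/2}\to0$ and introduce $\tilde\rho_n(x):=\ell_n^{3}\rho_n(\ell_n x+y_n)$ (mass one), where the translations $y_n$ are fixed later by the compactness step. The first task is an \emph{energy splitting}. Using the identity $\sqrt{|p|^2+m^2}-|p|=m^2(\sqrt{|p|^2+m^2}+|p|)^{-1}$ and the explicit formula \eqref{kinetic}, I would expand the relativistic kinetic energy for large Fermi momentum as $\int_{\mathbb{R}^3}\big[j_m(\rho_n)-K_{\rm cl}\rho_n^{4/3}\big]=C_m\int_{\mathbb{R}^3}\rho_n^{2/3}+o(\ell_n)$, with $C_m=\tfrac{9}{16}m^2K_{\rm cl}^{-1}$; since $\ell_n\to0$ forces $(6\pi^2\rho_n/q)^{1/3}$ to be large on the bulk of the support, the subleading terms are genuinely of lower order after rescaling. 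Combining this with the exact identities $D(\rho_n,\rho_n)=\ell_n^{-1}D(\tilde\rho_n,\tilde\rho_n)$, $\int\rho_n^{4/3}=\ell_n^{-1}\int\tilde\rho_n^{4/3}$, $\int\rho_n^{2/3}=\ell_n\int\tilde\rho_n^{2/3}$ and $\ell_n^2=\tau_c-\tau_n$ yields
\[
\frac{\mathcal{E}^{\rm Ch}_{\tau_n}(\rho_n)}{\ell_n}=\underbrace{C_m\int_{\mathbb{R}^3}\tilde\rho_n^{2/3}+\tfrac12 D(\tilde\rho_n,\tilde\rho_n)}_{=:\,\mathcal{G}(\tilde\rho_n)}+\ell_n^{-2}\,\mathcal{E}^{\rm Ch}_{\tau_c}(\tilde\rho_n)\big|_{m=0}+o(1),
\]
where $\mathcal{E}^{\rm Ch}_{\tau_c}(\cdot)|_{m=0}=K_{\rm cl}\int(\cdot)^{4/3}-\tfrac{\tau_c}{2}D(\cdot,\cdot)\ge0$ by \eqref{ineq:HLS}. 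Since \eqref{eq:Ch-energy-approximate-1} and \eqref{asymptotic:Ch-energy} give $\mathcal{E}^{\rm Ch}_{\tau_n}(\rho_n)/\ell_n\to2\Lambda$, this forces $\mathcal{E}^{\rm Ch}_{\tau_c}(\tilde\rho_n)|_{m=0}=O(\ell_n^2)\to0$ and makes $\mathcal{G}(\tilde\rho_n)$, hence $\int\tilde\rho_n^{2/3}$, $D(\tilde\rho_n,\tilde\rho_n)$ and $\int\tilde\rho_n^{4/3}$, all bounded.

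Next I would apply the concentration--compactness alternative to $\{\tilde\rho_n\}$. \emph{Vanishing} is excluded because it forces $D(\tilde\rho_n,\tilde\rho_n)\to0$, hence $\int\tilde\rho_n^{4/3}\to0$, which contradicts $1=\int\tilde\rho_n\le(\int\tilde\rho_n^{4/3})^{1/2}(\int\tilde\rho_n^{2/3})^{1/2}$ together with the boundedness of $\int\tilde\rho_n^{2/3}$. The crucial point is the exclusion of \emph{dichotomy}: if the mass split into two lumps of masses $\alpha$ and $1-\alpha$ with $0<\alpha<1$ moving infinitely apart, then the inter-lump Coulomb interaction vanishes and the kinetic and $L^{4/3}$ energies become asymptotically additive, so $\mathcal{E}^{\rm Ch}_{\tau_c}(\tilde\rho_n)|_{m=0}$ would split as a sum of the two lumps' massless functionals; each is $\ge0$ by \eqref{ineq:HLS} and their sum tends to $0$, hence each tends to $0$, i.e.\ the lump of mass $\alpha$ would minimise $E^{\rm Ch}_{\tau_c}(\alpha)|_{m=0}=0$. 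But for $\nu<1$ this infimum is \emph{not attained}: \eqref{ineq:HLS} gives $\mathcal{E}^{\rm Ch}_{\tau_c}(\sigma)|_{m=0}\ge K_{\rm cl}(1-\nu^{2/3})\int\sigma^{4/3}$ whenever $\int\sigma=\nu$, so any minimising sequence has $\int(\cdot)^{4/3}\to0$ and therefore $\int(\cdot)^{2/3}\to\infty$ by the same Hölder bound. This makes $\int\tilde\rho_n^{2/3}\to\infty$, contradicting the boundedness of $\mathcal{G}(\tilde\rho_n)$. Hence only \emph{compactness} survives: there are translations $y_n$ making $\tilde\rho_n$ tight, and along a subsequence $\tilde\rho_n\rightharpoonup\sigma_0$ weakly in $L^{4/3}$ with $\int\sigma_0=1$.

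It then remains to identify $\sigma_0$ and upgrade the convergence. I would first show $D(\tilde\rho_n,\tilde\rho_n)\to D(\sigma_0,\sigma_0)$: splitting the Coulomb kernel, the far part is $O(1/R)$ uniformly by tightness, while $|\cdot|^{-1}\star\tilde\rho_n\to|\cdot|^{-1}\star\sigma_0$ strongly in $L^4_{\rm loc}$ (the Riesz potential regularises the $L^{4/3}$-bounded tight sequence), so the pairing $\int\tilde\rho_n\,(|\cdot|^{-1}\star\tilde\rho_n)$ passes to the limit. Weak lower semicontinuity of $\int(\cdot)^{4/3}$ then gives $\mathcal{E}^{\rm Ch}_{\tau_c}(\sigma_0)|_{m=0}\le\liminf\mathcal{E}^{\rm Ch}_{\tau_c}(\tilde\rho_n)|_{m=0}=0$, and since the functional is $\ge0$ we obtain $\mathcal{E}^{\rm Ch}_{\tau_c}(\sigma_0)|_{m=0}=0$; by the equality case of \eqref{ineq:HLS} this means $\sigma_0=\lambda_0^{3}Q(\lambda_0\,\cdot)$ for some $\lambda_0>0$ (residual translations absorbed into $y_n$). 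Moreover $K_{\rm cl}\int\tilde\rho_n^{4/3}=\mathcal{E}^{\rm Ch}_{\tau_c}(\tilde\rho_n)|_{m=0}+\tfrac{\tau_c}{2}D(\tilde\rho_n,\tilde\rho_n)\to K_{\rm cl}\int\sigma_0^{4/3}$, so $\|\tilde\rho_n\|_{L^{4/3}}\to\|\sigma_0\|_{L^{4/3}}$; with weak convergence and uniform convexity of $L^{4/3}$ this gives strong $L^{4/3}$ convergence, hence $\tilde\rho_n\to\sigma_0$ a.e.\ along a further subsequence, and strong $L^1$ convergence then follows from tightness.

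Finally, the scale $\lambda_0$ is pinned down by the energy. Using $D(\sigma_0,\sigma_0)=\lim D(\tilde\rho_n,\tilde\rho_n)$ and Fatou's lemma for $\int(\cdot)^{2/3}$ (justified by the a.e.\ convergence just obtained) gives $\mathcal{G}(\sigma_0)\le\liminf\mathcal{G}(\tilde\rho_n)\le2\Lambda$. On the other hand, the normalisation \eqref{cond:LE} yields $\tfrac12 D(Q_\lambda,Q_\lambda)=\lambda$ and $\int Q_\lambda^{2/3}=\lambda^{-1}\int Q^{2/3}$ for $Q_\lambda:=\lambda^{3}Q(\lambda\,\cdot)$, so $\mathcal{G}(Q_\lambda)=C_m\lambda^{-1}\int Q^{2/3}+\lambda$, which by the arithmetic--geometric mean inequality is minimised precisely at $\lambda=\Lambda=\big(C_m\int Q^{2/3}\big)^{1/2}=\tfrac34 m\big(K_{\rm cl}^{-1}\int Q^{2/3}\big)^{1/2}$ with minimum value $2\Lambda$, consistent with \eqref{Lambda}. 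Thus $\mathcal{G}(\sigma_0)=2\Lambda$ forces $\lambda_0=\Lambda$, i.e.\ $\sigma_0=\Lambda^{3}Q(\Lambda\,\cdot)$, the claimed profile. I expect the main obstacle to be the exclusion of dichotomy: because the limiting massless problem is critical and scale invariant, the usual strict binding inequality degenerates into an equality and is useless, and the resolution must come instead from the non-attainment of $E^{\rm Ch}_{\tau_c}(\nu)|_{m=0}=0$ for $\nu<1$, which drives the divergence of the subleading mass term $C_m\int(\cdot)^{2/3}$ and thereby restores compactness.
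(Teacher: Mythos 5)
Your overall architecture is the one the paper intends (it only sketches this theorem, saying it follows by adapting the concentration--compactness arguments of Section \ref{sec:behavior}): rescale by $\ell_n=(\tau_c-\tau_n)^{1/2}$, split the energy into the massless critical functional plus an $m^2$-correction plus the residual direct term, exclude dichotomy via the non-attainment of $E^{\rm Ch}_{\tau_c}(\nu)|_{m=0}=0$ for $\nu<1$, identify the limit as the Hardy--Littlewood--Sobolev optimizer, and pin the scale by the arithmetic--geometric mean identity $\inf_{\lambda>0}(C_m\lambda^{-1}\int Q^{2/3}+\lambda)=2\Lambda$. Your constants ($C_m=\tfrac{9}{16}m^2K_{\rm cl}^{-1}$, the scaling identities, the value of $\Lambda$) are all correct, and your quantitative version of non-attainment via $\mathcal{E}^{\rm Ch}_{\tau_c}(\sigma)|_{m=0}\ge K_{\rm cl}(1-\nu^{2/3})\int\sigma^{4/3}$ is a nice touch.

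There is, however, a genuine gap in the energy-splitting step on which both your no-vanishing and no-dichotomy arguments lean. The claimed expansion $\int[j_m(\rho_n)-K_{\rm cl}\rho_n^{4/3}]=C_m\int\rho_n^{2/3}+o(\ell_n)$ is not available: first, $\int\rho_n^{2/3}$ may simply be $+\infty$ for admissible $\rho_n$ (a tail $\rho_n(x)\sim|x|^{-3-\varepsilon}$ with small $\varepsilon$ lies in $L^1\cap L^{4/3}$ but $\rho_n^{2/3}\notin L^1$, and such a tail can be grafted onto a near-minimizer at negligible energy cost since $j_m(\rho)-K_{\rm cl}\rho^{4/3}\le m\rho$); second, on the low-density region $\rho\lesssim m^3$ the correction behaves like $m\rho$, not $C_m\rho^{2/3}$, so the error in your expansion on that region is of the \emph{same} order $O(\ell_n)$ as the term you are extracting, not $o(\ell_n)$. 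Consequently your functional $\mathcal{G}(\tilde\rho_n)$ need not be finite, and ``boundedness of $\int\tilde\rho_n^{2/3}$'' --- the linchpin of your exclusion of vanishing (via $1\le(\int\tilde\rho_n^{4/3})^{1/2}(\int\tilde\rho_n^{2/3})^{1/2}$) and of dichotomy --- is unjustified. The paper avoids this by working with the exact one-sided bound \eqref{ineq:operator}, i.e.\ $j_m(\rho)\ge K_{\rm cl}\rho^{4/3}+\tfrac{m^2}{2}\tilde j_m(\rho)$, where $\tilde j_m(\rho)\le\min(C\rho/m,\,2C_m\rho^{2/3}/m^{2})$ is always integrable and is only converted into $C_m\int\sigma_0^{2/3}$ at the very end by Fatou; and it rules out vanishing by an entirely different mechanism, namely the comparison $E^{\rm Ch}_{\tau'}(1)\le\mathcal{E}^{\rm Ch}_{\tau'}(\rho_n)=\mathcal{E}^{\rm Ch}_{\tau_n}(\rho_n)+\tfrac{\tau_n-\tau'}{2}D(\rho_n,\rho_n)$ at a nearby coupling $\tau'=\tau_n-\delta(\tau_c-\tau_n)$ together with the two-sided bound \eqref{ineq:direct-term-2}, which yields $D(\tilde\rho_n,\tilde\rho_n)\ge K>0$ directly. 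Your argument can be repaired along these lines (replace $C_m\rho^{2/3}$ by $\tfrac{m^2}{2}\tilde j_m(\rho)$ and split into the regions $\{\tilde\rho_n\gtrsim\ell_n^3\}$, where $\tilde j$ dominates a multiple of $\tilde\rho_n^{2/3}$, and $\{\tilde\rho_n\lesssim\ell_n^3\}$, whose mass is forced to vanish because there $\ell_n^{-2}\tfrac{m'^2}{2}\tilde j_{m'}(\tilde\rho_n)\gtrsim\ell_n^{-1}\tilde\rho_n$), but as written the compactness steps do not go through.
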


	The aim of this section is to show that the HFB energy $N^{-1}E_{\tau}^{\rm HFB}(N)$ has the same asymptotic behavior as the Chandrasekhar energy in \eqref{asymptotic:Ch-energy} when $N\to\infty$ and $\tau:=\tau_{N}=\kappa N^{2/3} \nearrow \tau_c$ slowly. We note that the operator inequality of HFB states in \eqref{ineq:Gamma} holds on $L^{2}(\mathbb{R}^3,\mathbb{C}^q)\oplus L^{2}(\mathbb{R}^3,\mathbb{C}^q)$. This guarantees that the pair $(\gamma,\alpha)$ is associated to a unique quasi-free state in Fock space (see \cite{BacLieSol-94}). Also, it leads to the operator inequality (see \cite{BacLieSol-94})
\begin{equation}\label{ineq:gamma-alpha}
\gamma^{2} +\alpha\alpha^{*} \leq \gamma.
\end{equation}
The basic fact $0\leq \gamma\leq 1$ refers to the Pauli exclusion principle \cite[Theorem 3.2]{LieSei-10} (see also \cite{Lieb-83}).

\begin{lemma}[Collapse of the HFB Energy]\label{lem:HFB-energy}
	Let $q\geq 1$ be given and suppose that $m>0$. Let $0 < \tau_{N} = \tau_{c} - N^{-\beta}$ with $0<\beta<1/9$. Then we have
\begin{equation}\label{ineq:HFB-energy}
\frac{E_{\tau_{N}}^{\rm HFB}(N)}{N} = E_{\tau_{N}}^{\rm Ch}(1) + o(E_{\tau_{N}}^{\rm Ch}(1))_{N\to\infty} = (\tau_{c}-\tau_{N})^{1/2} \left(2\Lambda + o(1)_{N\to\infty}\right).
\end{equation}
\end{lemma}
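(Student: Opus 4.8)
The plan is to establish the two-sided bound in \eqref{ineq:HFB-energy} separately. The second equality in the statement is just a restatement of Theorem \ref{thm:behavior-Ch-minimier} applied to the sequence $\tau_N = \tau_c - N^{-\beta}$, since $\tau_N \nearrow \tau_c$ and $\tau_c - \tau_N = N^{-\beta}$; so the real content is the first equality, namely that $N^{-1}E_{\tau_N}^{\rm HFB}(N)$ agrees with $E_{\tau_N}^{\rm Ch}(1)$ to leading order. I would prove this by showing
\[
E_{\tau_N}^{\rm HFB}(N) \le N\,E_{\tau_N}^{\rm Ch}(1)\bigl(1 + o(1)\bigr)
\quad\text{and}\quad
E_{\tau_N}^{\rm HFB}(N) \ge N\,E_{\tau_N}^{\rm Ch}(1)\bigl(1 + o(1)\bigr),
\]
where in both cases the error is controlled relative to the size $E_{\tau_N}^{\rm Ch}(1) \sim (\tau_c-\tau_N)^{1/2} = N^{-\beta/2}$ of the main term.

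\textbf{Upper bound.} For the upper bound I would construct a trial HFB state from the Chandrasekhar minimizer. Let $\rho_{\tau_N}$ be the (radial, decreasing) minimizer of $E^{\rm Ch}_{\tau_N}(1)$ given by Theorem \ref{thm:existence-Ch-minimizer}, and build a density matrix $\gamma$ with $\tr\gamma = N$ and density $\rho_\gamma \approx N\rho_{\tau_N}$ whose kinetic energy $\tr\sqrt{-\Delta+m^2}\,\gamma$ is close to $N\int j_m(\rho_{\tau_N})$; this is the standard coherent-state / Weyl-quantization construction used to make semiclassical theory an upper bound for fermionic kinetic energy (as in Lieb--Yau). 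I would take $\alpha = 0$, so the pairing term drops out entirely. It then remains to show that the two genuinely quantum corrections — the semiclassical error in the kinetic energy and the exchange term $\tfrac{\kappa}{2}{\rm Ex}(\gamma)$ — are both $o(E_{\tau_N}^{\rm Ch}(1)) = o(N^{1-\beta/2})$ after dividing by $N$. This is precisely where the blow-up scaling enters: the minimizer concentrates on length scale $\ell_N = (\tau_c-\tau_N)^{1/2}N^{1/3} = N^{1/3-\beta/2}$ by \eqref{collapse-HFB-ground-states}, and one must check that the exchange and semiclassical errors, estimated in terms of $\ell_N$ and $N$, are dominated by $N^{1-\beta/2}$ exactly when $\beta < 1/9$.

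\textbf{Lower bound.} For the lower bound I would start from an arbitrary HFB state $(\gamma,\alpha)$ and discard the favorable terms in the correct direction: the exchange term ${\rm Ex}(\gamma)\ge 0$ and the pairing term are both multiplied by $+\tfrac{\kappa}{2}$ and $-\tfrac{\kappa}{2}$ respectively, so I must bound the pairing term from above and show it is negligible. Using the operator inequality \eqref{ineq:gamma-alpha}, namely $\alpha\alpha^* \le \gamma - \gamma^2 \le \gamma$, together with the Pauli bound $0\le\gamma\le 1$, I would control $\iint |\alpha(x,y)|^2|x-y|^{-1}$ by the kinetic energy (via a Hardy-type or Kato inequality $|x-y|^{-1}\le C\sqrt{-\Delta}$) and by $\tr\gamma = N$, showing it contributes at order $\kappa N \cdot o(1)$ relative to the main term. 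After dropping pairing and exchange, the functional is bounded below by $\tr\sqrt{-\Delta+m^2}\,\gamma - \tfrac{\kappa}{2}D(\rho_\gamma,\rho_\gamma)$, and the Lieb--Thirring inequality (semiclassical lower bound for the kinetic energy) gives $\tr\sqrt{-\Delta+m^2}\,\gamma \ge (1-\varepsilon_N)\int j_m(\rho_\gamma)$ for a suitable correction. This reduces the problem to the Chandrasekhar functional evaluated at $\rho_\gamma$, hence to $N\,E^{\rm Ch}_{\tau_N}(1)$ after rescaling the density to unit mass.

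\textbf{The main obstacle} will be making the error terms quantitatively $o(E^{\rm Ch}_{\tau_N}(1))$ in the critical regime rather than merely $o(1)$, since here the main energy itself is vanishing like $N^{-\beta/2}$ and the minimizer is collapsing. The semiclassical-to-quantum error in the Lieb--Thirring / coherent-state estimates scales with a negative power of the concentration length $\ell_N = N^{1/3-\beta/2}$, and one must verify that this error, after the $N^{-1}$ normalization and comparison with $N^{-\beta/2}$, still tends to zero; tracking these exponents is what forces the restriction $\beta < 1/9$. The exchange and pairing terms, carrying the extra factor $\kappa = \mathcal{O}(N^{-2/3})$, are expected to be subleading (consistent with the second bullet of the Remark), but confirming this requires a careful Hardy/Kato estimate on the concentrated scale. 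Throughout I would lean on Theorem \ref{thm:behavior-Ch-minimier} to supply both the sharp asymptotics $E^{\rm Ch}_{\tau_N}(1)\sim(\tau_c-\tau_N)^{1/2}2\Lambda$ and the concentration scale that governs all the error estimates.
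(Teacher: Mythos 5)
Your overall architecture matches the paper's: an upper bound via a trial state $(\gamma,0)$ built from the Chandrasekhar minimizer, with the exchange term controlled by the Hardy--Kato inequality, and a lower bound in which the exchange term is dropped, the pairing term is absorbed into the kinetic energy via $\alpha\alpha^*\le\gamma$ and $|x|^{-1}\le\tfrac{\pi}{2}\sqrt{-\Delta}$, and the remaining Hartree-type functional is compared to $N\,E^{\rm Ch}_{\tau_N}(1)$, with all errors measured against the asymptotics $E^{\rm Ch}_{\tau_N}(1)\sim 2\Lambda(\tau_c-\tau_N)^{1/2}$ supplied by Theorem \ref{thm:behavior-Ch-minimier}.

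There is, however, one genuine gap in your lower bound. The step ``the Lieb--Thirring inequality gives $\tr\sqrt{-\Delta+m^2}\,\gamma\ge(1-\varepsilon_N)\int j_m(\rho_\gamma)$'' is not available: a kinetic lower bound in terms of $\int j_m(\rho_\gamma)$ with the \emph{sharp semiclassical constant} up to a vanishing error is precisely the conjectured sharp Lieb--Thirring inequality, which is open (the paper's own remark after Theorem \ref{thm:behavior} says as much). The known Daubechies bound has a constant strictly below $K_{\rm cl}$, and with a deficient kinetic constant the reduced functional would not reproduce $E^{\rm Ch}_{\tau_N}(1)$ --- the effective critical coupling would be wrong, and near $\tau_c$ the bound would be useless. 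The correct route, which the paper takes from Lieb--Yau, is the coherent-state bound of Lemma \ref{Chandrasekhar-lower-bound}: $\tr\sqrt{-\Delta+m^2}\,\gamma\ge\int j_m\bigl((\rho_\gamma\star|g|^2)(x)\bigr)\,{\rm d}x-\langle g,\sqrt{-\Delta}g\rangle\,\tr\gamma$, which carries the sharp constant but only for the \emph{smeared} density; one must then also pass from $D(\rho_\gamma,\rho_\gamma)$ to $D(\rho_\gamma\star|g|^2,\rho_\gamma\star|g|^2)$ and pay for both errors. Optimizing over the coherent-state width yields the additive error $2\epsilon m$ with $\epsilon=\mathcal{O}(N^{-1/9})$ and a shifted coupling $\tau_N'$ with $\tau_N'-\tau_N=\mathcal{O}(N^{-1/9})$; through the square-root asymptotics this becomes an error $\mathcal{O}(N^{-1/18})$, and demanding that it be $o((\tau_c-\tau_N)^{1/2})=o(N^{-\beta/2})$ is exactly what forces $\beta<1/9$. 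Relatedly, you attribute the restriction $\beta<1/9$ to the upper bound; in fact the upper-bound error $\tfrac{\kappa\pi}{4}\int j_m(\rho_N^{\rm Ch})=\mathcal{O}(N^{-2/3})(\tau_c-\tau_N)^{-1/2}$ only requires $\beta<2/3$, and the binding constraint comes entirely from the lower-bound smearing just described.
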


\begin{proof}
We start with the lower bound. Let $(\gamma_{N},\alpha_{N})$ be a minimizer of $E_{\tau_{N}}^{\rm HFB}(N)$ for $0<\tau_{N}=\kappa N^{2/3} < \tau_{c}$. Applying the Hardy--Kato inequality $|x|^{-1}\leq \frac{\pi}{2}\sqrt{-\Delta}$ (see \cite{Kato-95,Herbst-77}) in the variable $x$ with $y$ fixed and using \eqref{ineq:gamma-alpha} we obtain
\begin{equation}\label{ineq:pairing-exchange}
\iint_{\mathbb{R}^3\times \mathbb{R}^3}\frac{|\alpha_{N}(x,y)|^{2}}{|x-y|}{\rm d}x{\rm d}y \leq \frac{\pi}{2}\tr\sqrt{-\Delta}\alpha_{N}\alpha_{N}^{*} \leq \frac{\pi}{2}\tr\sqrt{-\Delta}\gamma_{N}.
\end{equation}
It follows from \eqref{ineq:pairing-exchange} and the non-negativity of the exchange term that
\begin{equation}\label{energy:lower}
E_{\tau_{N}}^{\rm HFB}(N) = \mathcal{E}_{\tau_{N}}^{\rm HFB}(\tau_{N},\alpha_{N}) \geq \left(1-\frac{\kappa\pi}{4}\right)\tr\sqrt{-\Delta+m^{2}}\gamma_{N} - \frac{\kappa}{2} D(\rho_{\gamma_{N}},\rho_{\gamma_{N}}).
\end{equation}
By the arguments in \cite[Proof of Theorem 1]{LieYau-87} we have
\begin{equation}\label{ineq:lower-bound-1}
\frac{E_{\tau_{N}}^{\rm HFB}(N)}{N} \geq E_{\tau_{N}'}^{\rm Ch}(1) - 2\epsilon m,
\end{equation}
where $\tau_{N}' = \kappa'N^{2/3} < \tau_c$ with $\kappa' = \kappa(1-\kappa\pi/4-\epsilon)^{-1}$ and $\epsilon = 1.7q^{1/3}\kappa^{2/3}N^{1/3}=\mathcal{O}(N^{-1/9})$. We deduce from the asymptotic formula for $E_{\tau_{N}}^{\rm Ch}(1)$ in \eqref{asymptotic:Ch-energy} that
\begin{align}\label{ineq:lower-bound-2}
E_{\tau_{N}'}^{\rm Ch}(1) - E_{\tau_{N}}^{\rm Ch}(1) & = ((\tau_{c}-\tau_{N}')^{1/2} - (\tau_{c}-\tau_{N})^{1/2}) \left(2\Lambda + o(1)_{N\to\infty}\right) \nonumber \\
& \geq -(\tau_{N} '-\tau_{N})^{1/2} \left(2\Lambda + o(1)_{N\to\infty}\right).
\end{align}
Since $\kappa=\tau_{N}N^{-2/3} = \mathcal{O}(N^{-2/3})$ we have $\tau_{N} '-\tau_{N} = \mathcal{O}(N^{-1/9})$. Thus, it follows from \eqref{ineq:lower-bound-1}, \eqref{ineq:lower-bound-2} and \eqref{asymptotic:Ch-energy} that
$$
\frac{E_{\tau_{N}}^{\rm HFB}(N)}{N} \geq E_{\tau_{N}}^{\rm Ch}(1)(1 - \mathcal{O}(N^{-1/18})(\tau_{c}-\tau_{N})^{-1/2}).
$$
The error term $\mathcal{O}(N^{-1/18})(\tau_{c}-\tau_{N})^{-1/2}$ is of order $1$ when $\tau_{c} - \tau_{N} = N^{-\beta}$ with $0<\beta<1/9$.

Now we turn to the upper bound. Again, applying the Hardy--Kato inequality (see \cite{Kato-95,Herbst-77}) in the variable $x$ with $y$ fixed and using \eqref{ineq:gamma-alpha} we have
\begin{equation}\label{estimate:upper-energy-1}
{\rm Ex}(\gamma) = \iint_{\mathbb{R}^3\times \mathbb{R}^3}\frac{|\gamma(x,y)|^{2}}{|x-y|}{\rm d}x{\rm d}y \leq \frac{\pi}{2}\tr\sqrt{-\Delta}\gamma^{2} \leq \frac{\pi}{2}\tr\sqrt{-\Delta}\gamma.
\end{equation}
On the other hand, for any $0<\tau_{N}<\tau_{c}$, let $\rho_{N}^{\rm Ch}$ be the unique minimizer (up to translation) of $E_{\tau_{N}}^{\rm Ch}(1)$ in \eqref{eq:Ch-energy}. Since $D(\rho_{\gamma}-\rho_{N}^{\rm Ch}(N^{-1/3}\cdot),\rho_{\gamma}-\rho_{N}^{\rm Ch}(N^{-1/3}\cdot))\geq 0$, one have
\begin{equation}\label{estimate:upper-energy-2}
-D(\rho_{\gamma},\rho_{\gamma}) \leq D(\rho_{N}^{\rm Ch}(N^{-1/3}\cdot),\rho_{N}^{\rm Ch}(N^{-1/3}\cdot)) - 2D(\rho_{N}^{\rm Ch}(N^{-1/3}\cdot),\rho_{\gamma}).
\end{equation}
We deduce from \eqref{estimate:upper-energy-1}, \eqref{estimate:upper-energy-2} and the non-negativity of the pairing term that
\begin{align}
E_{\tau_{N}}^{\rm HFB}(N) \leq & \inf_{(\gamma,0)\in\mathcal{K}_{\rm HFB},\tr\gamma=N}\left\{\tr\left[\left(1+\frac{\kappa\pi}{4}\right)\sqrt{-\Delta+m^{2}} - \kappa\rho_{N}^{\rm Ch}(N^{-1/3}\cdot)\star |\cdot|^{-1}\right]\gamma\right\} \nonumber\\
& + \frac{\kappa}{2} D(\rho_{N}^{\rm Ch}(N^{-1/3}\cdot),\rho_{N}^{\rm Ch}(N^{-1/3}\cdot)). \label{ener:upper}
\end{align}
By choosing the trial state $\gamma := \mathbbm{1}(\sqrt{-\Delta}\leq \eta_{N}^{\rm Ch})
$ with $\eta_{N}^{\rm Ch}(x) = (6\pi^{2}\rho_{N}^{\rm Ch}(N^{-1/3}x)/q)^{1/3}$, we obtain
\begin{align}\label{ener:estimate-upper}
E_{\tau_{N}}^{\rm HFB}(N) \leq & \int_{\mathbb{R}^3}\frac{q}{(2\pi)^3}\int_{|p|\leq \eta_{N}^{\rm Ch}(x)}\left(1+\frac{\kappa\pi}{4}\right)\sqrt{|p|^{2}+m^{2}} - \kappa(\rho_{N}^{\rm Ch}(N^{-1/3}\cdot)\star |\cdot|^{-1})(x){\rm d}p{\rm d}x \nonumber \\
&  +  \frac{\kappa}{2} D(\rho_{N}^{\rm Ch}(N^{-1/3}\cdot),\rho_{N}^{\rm Ch}(N^{-1/3}\cdot)) \nonumber \\
= & \left(1+\frac{\kappa\pi}{4}\right)\int_{\mathbb{R}^3}j_{m}(\rho_{N}^{\rm Ch}(N^{-1/3}x)){\rm d}x - \frac{\kappa}{2} D(\rho_{N}^{\rm Ch}(N^{-1/3}\cdot),\rho_{N}^{\rm Ch}(N^{-1/3}\cdot)) \nonumber \\
= & N\left(E^{\rm Ch}_{\tau_{N}}(1) + \frac{\kappa\pi}{4}\int_{\mathbb{R}^3}j_{m}(\rho_{N}^{\rm Ch}(x)){\rm d}x\right).
\end{align}
We deduce from \eqref{ener:estimate-upper}, \eqref{ineq:HLS} and the asymptotic formula of $E_{\tau_{N}}^{\rm Ch}(1)$ in \eqref{asymptotic:Ch-energy} that
\begin{equation}\label{lem:moment}
\frac{E_{\tau_{N}}^{\rm HFB}(N)}{N} \leq E_{\tau_{N}}^{\rm Ch}(1) + \mathcal{O}(N^{-2/3})(\tau_{c}-\tau_{N})^{-1/2} = E_{\tau_{N}}^{\rm Ch}(1)(1 + \mathcal{O}(N^{-2/3})(\tau_{c}-\tau_{N})^{-1}).
\end{equation}
The error term $\mathcal{O}(N^{-2/3})(\tau_{c}-\tau_{N})^{-1}$ is of order $1$ when $\tau_{c} - \tau_{N} = N^{-\beta}$ with $0<\beta<2/3$.
\end{proof}

We now collect a moment estimate for the densities of the HFB minimizers by using \eqref{lem:moment}. This estimate will be useful for the proof of Theorem \ref{thm:behavior} in the next section.

\begin{lemma}[Moment Estimate] \label{lem:coercivity}
	Let $q\geq 1$ be given and suppose that $m>0$. Assume that $0 < \tau_{N} = \tau_{c} - N^{-\beta}$ with $0<\beta<1/9$. Let $(\gamma_{N},\alpha_{N})$ be a minimizer of $E_{\tau_{N}}^{\rm HFB}(N)$. Then we have
	\begin{equation}\label{ineq:coercivity}
	\frac{1}{N}\tr\sqrt{-\Delta+m^{2}}\gamma_{N} \leq C(\tau_{c}-\tau_{N})^{-1/2}.
	\end{equation}
\end{lemma}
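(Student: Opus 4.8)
The plan is to control $T_{N}:=\tr\sqrt{-\Delta+m^{2}}\gamma_{N}$ by exploiting the near-cancellation between the kinetic and direct terms near criticality. The tempting shortcut—bound the direct term by the Hardy--Littlewood--Sobolev inequality \eqref{ineq:HLS} and then bound $\|\rho_{\gamma_{N}}\|_{L^{4/3}}^{4/3}$ by the relativistic Lieb--Thirring inequality—does not work here, because the best available Lieb--Thirring constant is strictly smaller than $K_{\rm cl}$, so the coefficient of $T_{N}$ that one produces has the wrong sign and the resulting functional is not coercive. This is exactly the obstruction recorded in the first item of the Remark following Theorem~\ref{thm:behavior}, and it is the main difficulty of the lemma. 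The way around it is to never estimate the kinetic and direct terms separately, but to feed their specific combination into the Lieb--Yau lower bound, which recovers the \emph{sharp} semi-classical constant asymptotically.

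Concretely, I would start from the lower bound \eqref{energy:lower},
\[
E_{\tau_{N}}^{\rm HFB}(N) \geq \Big(1-\tfrac{\kappa\pi}{4}\Big)T_{N} - \tfrac{\kappa}{2}D(\rho_{\gamma_{N}},\rho_{\gamma_{N}}),
\]
and split off a small fraction $\delta>0$ of the kinetic energy, rewriting the right-hand side as
\[
\delta\Big(1-\tfrac{\kappa\pi}{4}\Big)T_{N} + \Big[(1-\delta)\Big(1-\tfrac{\kappa\pi}{4}\Big)T_{N} - \tfrac{\kappa}{2}D(\rho_{\gamma_{N}},\rho_{\gamma_{N}})\Big].
\]
The bracketed expression has the form $aT_{N}-\tfrac{\kappa}{2}D$ with $a=(1-\delta)(1-\kappa\pi/4)$, so the Lieb--Yau argument behind \eqref{ineq:lower-bound-1} applies unchanged and bounds it below by $N\big(E^{\rm Ch}_{\hat\tau_{N}}(1)-2\epsilon m\big)$, where now $\hat\tau_{N}=\tau_{N}(a-\epsilon)^{-1}$ and $\epsilon=\mathcal{O}(N^{-1/9})$.

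The decisive choice is $\delta \asymp \tau_{c}-\tau_{N}$, say $\delta=\tfrac{1}{2}\tau_{c}^{-1}(\tau_{c}-\tau_{N})$. Since $\kappa\pi/4=\mathcal{O}(N^{-2/3})$ and $\epsilon=\mathcal{O}(N^{-1/9})$ are both $o(\tau_{c}-\tau_{N})$ for $0<\beta<1/9$, this choice keeps $\hat\tau_{N}\leq\tau_{c}$ (the requirement is $\tau_{c}(\delta+\kappa\pi/4+\epsilon)\leq\tau_{c}-\tau_{N}$), so that $E^{\rm Ch}_{\hat\tau_{N}}(1)\geq 0$ by Theorem~\ref{thm:existence-Ch-minimizer}. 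Dropping this non-negative term gives
\[
\delta\Big(1-\tfrac{\kappa\pi}{4}\Big)T_{N}\leq E_{\tau_{N}}^{\rm HFB}(N)+2\epsilon m N,
\]
and inserting the energy upper bound $E_{\tau_{N}}^{\rm HFB}(N)\leq CN(\tau_{c}-\tau_{N})^{1/2}$ from Lemma~\ref{lem:HFB-energy}, then dividing by $N\delta(1-\kappa\pi/4)\asymp N(\tau_{c}-\tau_{N})$, yields
\[
\frac{1}{N}T_{N}\leq C(\tau_{c}-\tau_{N})^{-1/2}+C\,\frac{\epsilon m}{\tau_{c}-\tau_{N}}.
\]

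It remains to check that the last term is absorbed by the first, and this is where the hypothesis on $\beta$ enters. With $\tau_{c}-\tau_{N}=N^{-\beta}$ and $\epsilon=\mathcal{O}(N^{-1/9})$ the error term is $\mathcal{O}(N^{\beta-1/9})$, while the main term is of order $(\tau_{c}-\tau_{N})^{-1/2}=N^{\beta/2}$; since $\beta-1/9<\beta/2$ precisely when $\beta<2/9$, which holds a fortiori for $\beta<1/9$, the error is indeed $\mathcal{O}((\tau_{c}-\tau_{N})^{-1/2})$ and \eqref{ineq:coercivity} follows. The main obstacle, as stressed above, is the sign problem coming from the non-sharp Lieb--Thirring constant; the whole argument is organized so as to replace the pointwise kinetic estimate by the Lieb--Yau bound at the rescaled, still sub-critical coupling $\hat\tau_{N}$, where the non-negativity $E^{\rm Ch}_{\hat\tau_{N}}(1)\geq0$ becomes available. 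Balancing $\delta$ against $\tau_{c}-\tau_{N}$—large enough that the coercivity constant $\delta$ does not decay too fast, small enough that $\hat\tau_{N}$ stays below $\tau_{c}$—is the technical heart of the proof.
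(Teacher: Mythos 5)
Your proposal is correct and follows essentially the same route as the paper: the paper likewise splits off a fraction $\epsilon=\tfrac12\tau_c^{-1}(\tau_c-\tau_N)$ of the kinetic energy, writes the remainder as $(1-\epsilon)$ times the HFB functional at the rescaled coupling $\tau_N/(1-\epsilon)$, bounds that below via the Lieb--Yau estimate \eqref{ineq:lower-bound-1} and the non-negativity of $E^{\rm Ch}$ at the (still subcritical) shifted coupling, and closes with the upper bound \eqref{lem:moment}. Your bookkeeping of the error $\epsilon m(\tau_c-\tau_N)^{-1}=\mathcal{O}(N^{\beta-1/9})$ against the main term $N^{\beta/2}$ matches the paper's use of $\beta<1/9$.
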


\begin{proof}
	Let $\tau_{N}' = \tau_{N} + \mathcal{O}(N^{-1/9}) < \tau_{c}$. For any $0<\epsilon<1$ we have 
\begin{align}
E_{\tau_{N}}^{\rm Ch}(1) + \mathcal{O}(N^{-2/3}) & \geq \frac{E_{\tau_{N}}^{\rm HFB}(N)}{N} \nonumber \\
& \geq \epsilon \frac{1}{N}\tr\sqrt{-\Delta+m^{2}}\gamma_{N} + (1-\epsilon)\frac{1}{N}E_{\frac{\tau_{N}}{1-\epsilon}}^{\rm HFB}(N) \nonumber \\
& \geq \epsilon \frac{1}{N}\tr\sqrt{-\Delta+m^{2}}\gamma_{N} + (1-\epsilon)E_{\frac{\tau_{N}'}{1-\epsilon}}^{\rm Ch}(1) - \mathcal{O}(N^{-1/9}). \label{ineq:moment}
\end{align}
Here we have used \eqref{lem:moment} for the first estimate and \eqref{ineq:lower-bound-1} for the last estimate. Since $\tau_{N} = \tau_{c} - N^{-\beta}$ with $0<\beta<1/9$, we can choose $\epsilon$ small such that $(1-\epsilon)^{-1}\tau_{N}'<\tau_{c}$. Indeed, we can choose $2\epsilon = \tau_{c}^{-1}(\tau_{c}-\tau_{N})$.
This implies that $E_{\frac{\tau_{N}'}{1-\epsilon}}^{\rm Ch}(1)\geq 0$. Thus, \eqref{ineq:coercivity} is obtained from \eqref{ineq:moment} and the asymptotic formula for $E_{\tau_{N}}^{\rm Ch}(1)$ in \eqref{asymptotic:Ch-energy}.
\end{proof}

\begin{remark}\label{rem:coercivity}
	It follows from \eqref{ineq:coercivity} and Daubechies' inequality \cite{Daubechies-83} that
	$$
	\int_{\mathbb{R}^3}\rho_{\gamma_{N}}(N^{1/3}x)^{4/3}{\rm d}x = \frac{1}{N}\int_{\mathbb{R}^3}\rho_{\gamma_{N}}(x)^{4/3}{\rm d}x \leq C(\tau_{c}-\tau_{N})^{-1/2}.
	$$
\end{remark}

\section{Blow-Up of the HFB Minimizers} \label{sec:behavior}

In this section, we prove the convergence of the sequence of the densities of the HFB minimizers in Theorem \ref{thm:behavior}. We will need the following two lemmas.

\begin{lemma}\label{Chandrasekhar-lower-bound}
	Let $q\geq 1$ be given and suppose that $m\geq 0$. Let $g\in H^{1/2}(\mathbb{R}^3)$ with $\|g\|_{L^{2}}=1$. Then for any positive semi-definite operator $0\leq \gamma \leq 1$ and $\rho_{\gamma}(x)=\gamma(x,x)$ we have
	$$
	\tr\sqrt{-\Delta+m^{2}}\gamma \geq \int_{\mathbb{R}^3}j_{m}((\rho_{\gamma}\star |g|^{2})(x)){\rm d}x - \langle g,\sqrt{-\Delta}g \rangle \tr\gamma .
	$$
\end{lemma}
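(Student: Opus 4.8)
The plan is to use the coherent-state (Husimi function) method of Lieb--Yau \cite{LieYau-87}, which is tailored to produce exactly this kind of semiclassical lower bound for the relativistic kinetic energy. Fix an orthonormal basis $\{e_i\}_{i=1}^q$ of $\mathbb{C}^q$ and introduce the coherent states $f^{(i)}_{p,x}(y) = e^{ip\cdot y}g(y-x)e_i$. Since $\|g\|_{L^2}=1$, these satisfy the resolution of identity $\frac{1}{(2\pi)^3}\sum_{i=1}^q\iint_{\mathbb{R}^3\times\mathbb{R}^3}|f^{(i)}_{p,x}\rangle\langle f^{(i)}_{p,x}|\,{\rm d}p\,{\rm d}x = \mathbbm{1}$. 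I would then define the spin-summed Husimi function $M_\gamma(p,x) := \sum_{i=1}^q\langle f^{(i)}_{p,x},\gamma f^{(i)}_{p,x}\rangle$ and record its two basic properties: from $0\le\gamma\le1$ one obtains the pointwise bounds $0\le M_\gamma(p,x)\le q$, and integrating over $p$ via the Fourier representation of $f^{(i)}_{p,x}$ yields the position marginal $\frac{1}{(2\pi)^3}\int_{\mathbb{R}^3}M_\gamma(p,x)\,{\rm d}p = (\rho_\gamma\star|g|^2)(x)$ (replacing $g$ by $g(-\cdot)$ changes neither $\|g\|_{L^2}$ nor $\langle g,\sqrt{-\Delta}g\rangle$, so I may take $g$ radial as in the application and ignore this asymmetry).

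The heart of the argument is a relativistic localization (IMS-type) estimate: for every $0\le\gamma\le1$ one has
$$
\tr\sqrt{-\Delta+m^2}\gamma \ge \frac{1}{(2\pi)^3}\iint_{\mathbb{R}^3\times\mathbb{R}^3}\sqrt{|p|^2+m^2}\,M_\gamma(p,x)\,{\rm d}p\,{\rm d}x - \langle g,\sqrt{-\Delta}g\rangle\,\tr\gamma.
$$
I would establish this as an operator inequality. The first term on the right equals $\tr[\tilde A\gamma]$, where $\tilde A := \frac{1}{(2\pi)^3}\sum_i\iint\sqrt{|p|^2+m^2}\,|f^{(i)}_{p,x}\rangle\langle f^{(i)}_{p,x}|\,{\rm d}p\,{\rm d}x$ is the operator with upper symbol $\sqrt{|p|^2+m^2}$. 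Computing $\langle\psi,\tilde A\psi\rangle = \int_{\mathbb{R}^3}\langle g(\cdot-x)\psi,\sqrt{-\Delta+m^2}\,g(\cdot-x)\psi\rangle\,{\rm d}x$ and integrating out $x$ with $\int_{\mathbb{R}^3}|g(\cdot-x)|^2{\rm d}x=1$ identifies $\tilde A$ with the Fourier multiplier $e(-i\nabla)$, where $e(p) = \int_{\mathbb{R}^3}\sqrt{|p+k|^2+m^2}\,|\hat g(k)|^2\,\frac{{\rm d}k}{(2\pi)^3}$. The claimed bound then reduces to the pointwise estimate $e(p)\le\sqrt{|p|^2+m^2}+\langle g,\sqrt{-\Delta}g\rangle$, which follows because $p\mapsto\sqrt{|p|^2+m^2}$ is $1$-Lipschitz (for every $m\ge0$), so $\sqrt{|p+k|^2+m^2}\le\sqrt{|p|^2+m^2}+|k|$, and $\int_{\mathbb{R}^3}|k|\,|\hat g(k)|^2\frac{{\rm d}k}{(2\pi)^3}=\langle g,\sqrt{-\Delta}g\rangle$. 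Positivity of $\gamma$ then upgrades the resulting operator inequality $\sqrt{-\Delta+m^2}\ge\tilde A-\langle g,\sqrt{-\Delta}g\rangle$ to the displayed trace inequality.

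It remains to pass from the phase-space integral to $j_m$. For each fixed $x$ the profile $p\mapsto M_\gamma(p,x)$ obeys $0\le M_\gamma(p,x)\le q$ and has prescribed mass $\frac{1}{(2\pi)^3}\int_{\mathbb{R}^3} M_\gamma(p,x)\,{\rm d}p=(\rho_\gamma\star|g|^2)(x)$. By the bathtub principle, the minimum of $\frac{1}{(2\pi)^3}\int_{\mathbb{R}^3}\sqrt{|p|^2+m^2}\,n(p)\,{\rm d}p$ over all such profiles is attained by filling the Fermi ball $n=q\,\mathbbm{1}(|p|<(6\pi^2\rho/q)^{1/3})$, whose value is precisely $j_m\big((\rho_\gamma\star|g|^2)(x)\big)$ by \eqref{kinetic}. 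Hence $\frac{1}{(2\pi)^3}\int_{\mathbb{R}^3}\sqrt{|p|^2+m^2}\,M_\gamma(p,x)\,{\rm d}p\ge j_m\big((\rho_\gamma\star|g|^2)(x)\big)$, and integrating over $x$ and combining with the localization estimate gives the lemma. The main obstacle is the localization estimate of the second paragraph: the momentum-smearing error must be controlled by \emph{exactly} $\langle g,\sqrt{-\Delta}g\rangle$, and the decisive structural inputs are the $1$-Lipschitz continuity of the relativistic symbol and the identification of the upper-symbol operator $\tilde A$ with the multiplier $e(-i\nabla)$. The bathtub step, by contrast, is elementary, and it is there that the constraint $\gamma\le1$ (hence $M_\gamma\le q$) enters to generate the correct Fermi-ball constant in $j_m$.
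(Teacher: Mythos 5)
Your argument is correct: the coherent-state resolution of identity, the identification of the upper-symbol operator with the multiplier $e(-i\nabla)$ controlled via the $1$-Lipschitz bound $\sqrt{|p+k|^2+m^2}\leq\sqrt{|p|^2+m^2}+|k|$, and the bathtub step using $0\leq M_\gamma\leq q$ reproduce precisely the proof of Lemma~B.3 in Lieb--Yau, which is exactly what the paper cites here in lieu of a proof. So you have taken the same route as the paper, just written out in full.
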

\begin{proof}
	The proof of this lemma can be found in \cite[Proof of Lemma B.3]{LieYau-87}.
\end{proof}

\begin{lemma}\label{lem:LT}
	Let $q\geq 1$ be given and let $0\leq \gamma_{N}\leq 1$ be a sequence of density matrix as a trace class operator such that the sequence $\rho_{\gamma_{N}}(N^{1/3}\cdot) = \gamma_{N}(N^{1/3}\cdot,N^{1/3}\cdot)$ converges to $\rho$ weakly in $L^{4/3}(\mathbb{R}^3)$. Then we have
	\begin{equation}\label{conv-ground-state-0}
	\liminf_{N\to\infty}\frac{1}{N}\tr\sqrt{-\Delta}\gamma_{N} \geq K_{\rm cl}\int_{\mathbb{R}^3}\rho(x)^{4/3}{\rm d}x.
	\end{equation}
\end{lemma}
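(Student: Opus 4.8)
The plan is to pass from the operator trace to a Thomas--Fermi functional through the semiclassical bound of Lemma~\ref{Chandrasekhar-lower-bound}, and then to exploit weak lower semicontinuity in the rescaled variable. First I would record that in the massless case the bound of Lemma~\ref{Chandrasekhar-lower-bound} produces exactly the Thomas--Fermi term with constant $K_{\rm cl}$: setting $m=0$ in \eqref{kinetic} gives
$$
j_{0}(\rho) = \frac{q}{8\pi^{2}}\,\eta^{4} = \frac{q}{8\pi^{2}}\left(\frac{6\pi^{2}}{q}\right)^{4/3}\rho^{4/3} = K_{\rm cl}\,\rho^{4/3},\qquad \eta=\left(\frac{6\pi^{2}\rho}{q}\right)^{1/3},
$$
where the last equality is the identity $\frac{q}{8\pi^{2}}(6\pi^{2}/q)^{4/3}=\frac{3}{4}(6\pi^{2}/q)^{1/3}$.

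Next comes the key step: choose a mollifier tuned to the macroscopic scale $N^{1/3}$ at which $\tilde\rho_N:=\rho_{\gamma_N}(N^{1/3}\cdot)$ lives. Fix $\phi\in H^{1/2}(\mathbb{R}^3)$ with $\|\phi\|_{L^2}=1$ and set $g_N(x)=N^{-1/2}\phi(N^{-1/3}x)$, so that $\|g_N\|_{L^2}=1$. Applying Lemma~\ref{Chandrasekhar-lower-bound} with $m=0$ and $g=g_N$, dividing by $N$, and changing variables $x=N^{1/3}y$ (recalling $\tr\gamma_N=N$), the convolution rescales to $(\rho_{\gamma_N}\star|g_N|^2)(N^{1/3}y)=(\tilde\rho_N\star|\phi|^2)(y)$ while the correction scales as $\langle g_N,\sqrt{-\Delta}g_N\rangle = N^{-1/3}\langle\phi,\sqrt{-\Delta}\phi\rangle$. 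This yields, for each fixed $\phi$,
$$
\frac{1}{N}\tr\sqrt{-\Delta}\gamma_{N}\;\geq\; K_{\rm cl}\int_{\mathbb{R}^3}\big(\tilde\rho_N\star|\phi|^2\big)(y)^{4/3}\,{\rm d}y\;-\;N^{-1/3}\langle\phi,\sqrt{-\Delta}\phi\rangle.
$$

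To conclude I would pass to the limit in two stages. For fixed $\phi$, convolution with the fixed kernel $|\phi|^2\in L^1(\mathbb{R}^3)$ is weak-to-weak continuous on $L^{4/3}$ (its adjoint, convolution with $|\phi(-\cdot)|^2$, maps the dual $L^4$ into itself by Young's inequality), so $\tilde\rho_N\star|\phi|^2\rightharpoonup\rho\star|\phi|^2$ in $L^{4/3}$; since $u\mapsto\int|u|^{4/3}$ is convex and strongly continuous, hence weakly lower semicontinuous, and since the correction vanishes as $N\to\infty$, we obtain
$$
\liminf_{N\to\infty}\frac{1}{N}\tr\sqrt{-\Delta}\gamma_{N}\;\geq\; K_{\rm cl}\int_{\mathbb{R}^3}\big(\rho\star|\phi|^2\big)(y)^{4/3}\,{\rm d}y.
$$
The left-hand side is independent of $\phi$, so it remains to let $|\phi|^2$ approach the identity, $|\phi_\epsilon|^2(x)=\epsilon^{-3}|\phi_1(x/\epsilon)|^2$: then $\rho\star|\phi_\epsilon|^2\to\rho$ strongly in $L^{4/3}$ as $\epsilon\to0$, so the right-hand side converges to $K_{\rm cl}\int_{\mathbb{R}^3}\rho^{4/3}$, which is precisely \eqref{conv-ground-state-0}.

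The main obstacle is the built-in tension of the semiclassical bound: making the mollifier narrow enough to recover $\int\rho^{4/3}$ fights against keeping the kinetic correction $\langle g,\sqrt{-\Delta}g\rangle$ small. This is resolved by the two-scale structure---placing $g_N$ at the macroscopic scale $N^{1/3}$ forces the correction to be $O(N^{-1/3})$ uniformly in the profile $\phi$, so that one may send $N\to\infty$ first for a fixed profile and only afterwards sharpen $|\phi|^2$ to a point mass. Getting the order of these two limits right is the one delicate point; everything else is routine once the weak-to-weak continuity of convolution and the weak lower semicontinuity of the $L^{4/3}$-functional are in place.
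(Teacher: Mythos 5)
Your argument is correct, but it takes a genuinely different route from the paper's. The paper proves \eqref{conv-ground-state-0} by duality: for every $0\leq V\in L^{4}(\mathbb{R}^{3})$ it writes $\tr\sqrt{-\Delta}\gamma_{N}=N^{-1/3}\tr(\sqrt{-\Delta}-N^{1/3}V)\tilde{\gamma}_{N}+\tr V\tilde{\gamma}_{N}$, bounds the first term below by $\tr[\sqrt{-\Delta}-N^{1/3}V]_{-}$ using $0\leq\tilde{\gamma}_{N}\leq1$ and evaluates it by Weyl's law, passes to the limit in the second term via the weak $L^{4/3}$ convergence, and finally optimizes over $V$, the Legendre transform of $V\mapsto cV^{4}$ producing exactly $K_{\rm cl}\int\rho^{4/3}$. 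You instead apply the coherent-state bound of Lemma \ref{Chandrasekhar-lower-bound} with $m=0$ (your computation $j_{0}(\rho)=K_{\rm cl}\rho^{4/3}$ is right) and a mollifier placed at the macroscopic scale $N^{1/3}$, which makes the kinetic correction $N^{-1/3}\langle\phi,\sqrt{-\Delta}\phi\rangle$ vanish uniformly in the profile, and then conclude by weak-to-weak continuity of convolution on $L^{4/3}$, weak lower semicontinuity of $u\mapsto\int|u|^{4/3}$, and an approximate identity; the order of limits is handled correctly since the left-hand side does not depend on $\phi$. Each approach buys something: yours stays entirely within tools the paper already quotes (the Lieb--Yau Lemma B.3 bound) and avoids invoking the semiclassical asymptotics for $\tr[\sqrt{-\Delta}-V]_{-}$, while the paper's dual formulation sidesteps the mollifier limit altogether. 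One small caveat on your version: the correction term is really $N^{-1/3}\langle\phi,\sqrt{-\Delta}\phi\rangle\,\tr\gamma_{N}/N$, so you implicitly need $\tr\gamma_{N}=\mathcal{O}(N)$; this is consistent with the paper's convention for a density matrix and holds in both places the lemma is used (for $\tilde{\gamma}_{N}$ and for the localized $\tilde{\gamma}_{N}^{(1)}$, whose trace is at most $N$), but it is worth stating explicitly since the lemma as written does not fix the trace.
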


\begin{proof}
	Let $\tilde{\gamma}_{N}(x,y) = N\gamma_{N}(N^{1/3}x,N^{1/3}y)$. For every function $0\leq V\in L^4(\mathbb{R}^3)$ we write
	\begin{equation}\label{conv-ground-state-1}
	\tr\sqrt{-\Delta}\gamma_{N} = N^{-1/3}\tr\sqrt{-\Delta}\tilde{\gamma}_{N} = N^{-1/3}\tr(\sqrt{-\Delta}-N^{1/3}V)\tilde{\gamma}_{N} + \tr V\tilde{\gamma}_{N}.
	\end{equation}
	By the assumption we have $0\leq \gamma_{N}\leq 1$ and hence $0\leq \tilde{\gamma}_{N}\leq 1$. We may apply the min-max principle \cite[Theorem 12.1]{LieLos-01} and Weyl's law on the sum of negative eigenvalues of Schr\"odinger operators (see \cite[Chapter 4]{LieSei-10}) to get the following estimate 
	\begin{align}\label{conv-ground-state-2}
	\tr(\sqrt{-\Delta}-N^{1/3}V)\tilde{\gamma}_{N}& \geq \tr[\sqrt{-\Delta}-N^{1/3}V]_{-} \nonumber\\
	& =  -\frac{N^{4/3}q|B_{\mathbb{R}^3}(0,1)|}{12(2\pi)^3}\left(\int_{\mathbb{R}^3}V(x)^{4}{\rm d}x+o(1)_{N\to\infty}\right).
	\end{align}
	On the other hand, it follows from the weak convergence $\rho_{\gamma_{N}}(N^{1/3}\cdot)\rightharpoonup\rho$ in $L^{4/3}(\mathbb{R}^3)$ that
	\begin{equation}\label{conv-ground-state-3}
	\lim_{N\to\infty}\frac{1}{N}\tr V\tilde{\gamma}_{N} = \lim_{N\to\infty}\int_{\mathbb{R}^3}V(x)\rho_{\gamma_{N}}(N^{1/3}x){\rm d}x = \int_{\mathbb{R}^3}V(x)\rho(x){\rm d}x.
	\end{equation}
	We deduce from \eqref{conv-ground-state-1}, \eqref{conv-ground-state-2} and \eqref{conv-ground-state-3} that
	$$
	\liminf_{N\to\infty}\frac{1}{N}\tr\sqrt{-\Delta}\gamma_{N} \geq -\frac{q|B_{\mathbb{R}^3}(0,1)|}{12(2\pi)^3}\int_{\mathbb{R}^3}V(x)^{4}{\rm d}x + \int_{\mathbb{R}^3}V(x)\rho(x){\rm d}x.
	$$
	Optimizing over $V$ leads to the desired lower bound \eqref{conv-ground-state-0}.
\end{proof}

From now on, we will denote $\ell_{N}=\Lambda(\tau_{c}-\tau_{N})^{-1/2}$. Let $\tilde{\gamma}_{N}(x,y) = \ell_{N}^{-3}\gamma_{N}(\ell_{N}^{-1}x,\ell_{N}^{-1}y)$ and $\rho_{\tilde{\gamma}_{N}}(x) = \tilde{\gamma}_{N}(x,x)$. Setting $w_{N}(x):=\rho_{\tilde{\gamma}_{N}}(N^{1/3}x)$ then $w_{N}$ is bounded uniformly in $L^{4/3}(\mathbb{R}^3)$, by Remark \ref{rem:coercivity}. The proof of \eqref{collapse-HFB-ground-states} in Theorem \ref{thm:behavior} is divided into several steps as follows.

\emph{Step 1: No vanishing.} We first rule out the vanishing of the sequence $\{w_{N}\}$. By vanishing we mean that
$$
\limsup_{N\to\infty}\Big(\sup_{y\in\mathbb{R}^3}\int_{|x-y|\leq R}w_{N}(x){\rm d}x\Big) = 0
$$
for all $R>0$. By the arguments in \cite[p.124]{Lions-84a} (see also \cite{LenLew-10,LenLew-11}) we obtain 
\begin{equation}\label{non-vanishing}
\lim_{N\to\infty}D(w_{N},w_{N}) = 0.
\end{equation}
On the other hand, for any $\tau_{N}'$ such that $0<\tau_{N}'<\tau_{N}<\tau_{c}$, we have
\begin{equation}\label{ineq:direct-term-1}
E_{\tau_{N}'}^{\rm HFB}(N) \leq \mathcal{E}_{\tau_{N}'}^{\rm HFB}(w_{N}) = E_{\tau_{N}}^{\rm HFB}(N) + \frac{\tau_{N}-\tau_{N}'}{2N^{2/3}}(D(\rho_{\gamma_{N}},\rho_{\gamma_{N}}) + CN\ell_{N}).
\end{equation}
Here we have used the non-negativity of the exchange term and the estimates using \eqref{ineq:pairing-exchange}, \eqref{ineq:coercivity} for the pairing term. Now we recall the following energy estimate in \cite[Lemma 7]{Nguyen-19f}
\begin{equation}\label{ineq:direct-term-2}
M_{1}(\tau_{c}-\tau_{N})^{1/2} \leq E_{\tau_{N}}^{\rm Ch}(1) \leq M_{2}(\tau_{c}-\tau_{N})^{1/2},
\end{equation}
where $M_{1}$ and $M_{2}$ are positive constants such that $M_{1}<M_{2}$. We deduce from \eqref{ineq:direct-term-1} and \eqref{ineq:direct-term-2} that
\begin{align*}
\frac{1}{2}D(w_{N},w_{N}) + \mathcal{O}(N^{-2/3}) & \geq \frac{E_{\tau_{N}'}^{\rm HFB}(N)-E_{\tau_{N}}^{\rm HFB}(N)}{N\ell_{N}(\tau_{N}-\tau_{N}')} \\
& \geq \frac{E_{\tau_{N}'+\mathcal{O}(N^{-1/9})}^{\rm Ch}(1)-E_{\tau_{N}}^{\rm Ch}(1)-\mathcal{O}(N^{-2/3})-\mathcal{O}(N^{-1/9})}{\ell_{N}(\tau_{N}-\tau_{N}')} \\
& \geq \frac{M_1(\tau_{c}-\tau_{N}'-\mathcal{O}(N^{-1/9}))^{1/2} - M_2(\tau_{c}-\tau_{N})^{1/2}-\mathcal{O}(N^{-1/9})}{\ell_{N}(\tau_{N}-\tau_{N}')} \\
& \geq \frac{M_1\left(\tau_{c}-\tau_{N}'\right)^{1/2} - M_2\left(\tau_{c}-\tau_{N}\right)^{1/2}-\mathcal{O}(N^{-1/18})}{\ell_{N}(\tau_{N}-\tau_{N}')}.
\end{align*}
Here we have used the estimates for the HFB energy as in the proof of Lemma \ref{lem:HFB-energy}. Choosing $\tau_{N}'=\tau_{N}-\delta(\tau_{c}-\tau_{N})$ with $\delta>0$ and recalling $\ell_{N}=\Lambda(\tau_{c}-\tau_{N})^{-1/2}$, we arrive at 
\begin{equation}\label{strict-non-vanishing}
\frac{1}{2}D(w_{N},w_{N}) + \mathcal{O}(N^{-2/3}) \geq \frac{\Lambda^{-1}}{\delta}(M_1(1+\delta)^{1/2} - M_2 - (\tau_{c}-\tau_{N})^{-1/2}\mathcal{O}(N^{-1/18})).
\end{equation}
The last term is strictly positive for $\delta$ large enough. For $0 < \tau_{N} = \tau_{c} - N^{-\beta}$ with $0<\beta<1/9$ and $N$ sufficiently large, we infer from \eqref {strict-non-vanishing} that there exists a positive constant $K$ such that
\[
D(w_{N},w_{N}) \geq K > 0.
\]
This contradicts \eqref{non-vanishing}. Hence vanishing does not occur.

\emph{Step 2: No dichotomy.} 
	Recall that the sequence $\{w_{N}\}$ is bounded uniformly in $L^{4/3}(\mathbb{R}^3)$, by Remark \ref{rem:coercivity}. In this step, we assume that $\{w_{N}\}$ is not relatively compact in $L^{1}(\mathbb{R}^3)$ up to translations. To obtain the desired contradiction, we make use of the dichotomy argument \cite{Lions-84a,Lions-84b}.
	
	\begin{lemma}[Strong Local Convergence]
		\label{lem:dichotomy}
		There exist a function $w^{(1)}\in L^{1}\cap L^{4/3}(\mathbb{R}^3)$ with $\int_{\mathbb{R}^3}w^{(1)}(x){\rm d}x=\nu\in (0,1)$, and sequences $\{R_{N}\}_{N\in\mathbb{N}}\subset\mathbb{R}_{+}$ with $R_{N}\to\infty$ and $\{y_{N}\}_{N\in\mathbb N}\subset\mathbb{R}^3$ such that, up to extraction of a subsequence,
		\begin{equation}\label{ineq:dichotomy}
		\lim_{N\to\infty}\int_{|x-y_{N}|\leq R_{N}}w_{N}(x){\rm d}x = \int_{\mathbb{R}^3}w^{(1)}(x){\rm d}x ,\quad \lim_{N\to\infty}\int_{R_{N}\leq |x-y_{N}| \leq 6R_{N}}w_{N}(x){\rm d}x = 0.
		\end{equation}
	\end{lemma}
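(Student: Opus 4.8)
The plan is to run Lions' concentration--compactness dichotomy \cite{Lions-84a,Lions-84b} on the probability densities $\{w_{N}\}$, recalling that $\int_{\mathbb{R}^3}w_{N}=1$ and that $\{w_N\}$ is bounded in $L^{4/3}(\mathbb{R}^3)$ by Remark~\ref{rem:coercivity}. First I would introduce the Lévy concentration function
$$
\mathcal{C}_{N}(R):=\sup_{y\in\mathbb{R}^3}\int_{|x-y|\le R}w_{N}(x)\,{\rm d}x,\qquad R>0,
$$
which is nondecreasing in $R$ and takes values in $[0,1]$. By Helly's selection theorem (a diagonal extraction over rational radii, using monotonicity) I pass to a subsequence along which $\mathcal{C}_{N}(R)\to\mathcal{C}(R)$ for every $R$, with $\mathcal{C}$ nondecreasing, and I set $\nu:=\lim_{R\to\infty}\mathcal{C}(R)\in[0,1]$.

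Next I would pin down $\nu\in(0,1)$. The value $\nu=0$ is exactly the vanishing already excluded in \emph{Step 1}, so $\nu>0$; and the value $\nu=1$ corresponds to tightness of $\{w_{N}\}$ modulo translations, i.e.\ to relative compactness in $L^{1}(\mathbb{R}^3)$ up to translations, which is precisely what the standing assumption of \emph{Step 2} rules out. Hence $0<\nu<1$, the dichotomy regime.

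The core construction is the \emph{simultaneous} choice of radii $R_{N}\to\infty$ and centers $y_{N}$. Since $\mathcal{C}$ is nondecreasing with $\lim_{R\to\infty}\mathcal{C}(R)=\nu$, for each $\varepsilon>0$ there is $R_{\varepsilon}$ with $\nu-\varepsilon<\mathcal{C}(R_{\varepsilon})\le\mathcal{C}(6R_{\varepsilon})\le\nu$, so that $\mathcal{C}(6R_{\varepsilon})-\mathcal{C}(R_{\varepsilon})<\varepsilon$. Transferring this to $\mathcal{C}_{N}$ by the pointwise convergence and choosing $y_{N}$ near-optimal, $\int_{|x-y_{N}|\le R_{\varepsilon}}w_{N}\ge\mathcal{C}_{N}(R_{\varepsilon})-1/N$, I bound the annular mass by
$$
\int_{R_{\varepsilon}\le|x-y_{N}|\le 6R_{\varepsilon}}w_{N}\le\mathcal{C}_{N}(6R_{\varepsilon})-\mathcal{C}_{N}(R_{\varepsilon})+\tfrac{1}{N}.
$$
A diagonal argument over $\varepsilon=1/k$ then yields a single sequence $R_{N}\to\infty$ together with $y_{N}$ for which $\int_{|x-y_{N}|\le R_{N}}w_{N}\to\nu$ and the annular mass tends to $0$, giving the second limit in \eqref{ineq:dichotomy}. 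To identify the profile, I translate: $w_{N}(\cdot+y_{N})$ is bounded in $L^{1}\cap L^{4/3}$, so along a further subsequence it converges weakly in $L^{4/3}(\mathbb{R}^3)$ to some $w^{(1)}\ge0$. Testing against $\mathbbm{1}_{\{|x|\le R\}}\in L^{4}(\mathbb{R}^3)$ at a \emph{fixed} radius $R$ gives $\int_{|x|\le R}w^{(1)}=\lim_{N}\int_{|x-y_{N}|\le R}w_{N}\ge\mathcal{C}(R)$, while $\int_{|x|\le R'}w^{(1)}\le\mathcal{C}(R')\le\nu$ for every $R'$ furnishes the reverse bound; letting $R\to\infty$ forces $\int_{\mathbb{R}^3}w^{(1)}=\nu\in(0,1)$, which is the first limit in \eqref{ineq:dichotomy}.

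The step I expect to be the main obstacle is the \emph{simultaneous} control of the two limits in \eqref{ineq:dichotomy}: forcing $R_{N}\to\infty$ while keeping the inner-ball mass at exactly $\nu$ and driving the annular mass to zero requires the diagonal coupling of the pointwise limit $\mathcal{C}_{N}\to\mathcal{C}$ with the monotone saturation $\mathcal{C}(R)\nearrow\nu$. A secondary delicate point is that, in the absence of global strong $L^{1}$ compactness, the identity $\int_{\mathbb{R}^3}w^{(1)}=\nu$ must be extracted purely from local weak convergence at fixed radii and from the concentration function, taking $R$ fixed \emph{before} sending it to infinity so as never to test against the characteristic function of a growing ball.
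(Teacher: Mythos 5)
Your overall strategy (L\'evy concentration function, Helly extraction, diagonal choice of radii) is the standard Lions route that the paper also points to, and the pieces of your argument concerning the existence of $\nu$, the annulus bound $\int_{R_{\varepsilon}\le|x-y_{N}|\le 6R_{\varepsilon}}w_{N}\le\mathcal{C}_{N}(6R_{\varepsilon})-\mathcal{C}_{N}(R_{\varepsilon})+1/N$, and the limit $\int_{|x-y_{N}|\le R_{N}}w_{N}\to\nu$ are fine. The genuine gap is in the identification of the weak limit: the inequality $\int_{|x|\le R}w^{(1)}=\lim_{N}\int_{|x-y_{N}|\le R}w_{N}\ge\mathcal{C}(R)$ at a \emph{fixed} radius $R$ requires $y_{N}$ to be near-optimal for the concentration function at that fixed radius, whereas your $y_{N}$ is only near-optimal at the radius $R_{\varepsilon}$ of the current diagonal block, and these radii tend to infinity. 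Near-optimality at a radius $R_{\varepsilon}\to\infty$ does not pin $y_{N}$ to within a bounded distance of the concentration region. Concretely, take $w_{N}$ consisting of two unit-scale bumps of equal mass $\nu\le 1/2$ centered at $0$ and at $z_{N}$ with $|z_{N}|\to\infty$, plus a spreading remainder; a center $y_{N}$ at distance $R_{\varepsilon}-2$ from the origin still captures the full mass of the first bump in $B(y_{N},R_{\varepsilon})$ and is therefore optimal up to $o(1)$, yet $w_{N}(\cdot+y_{N})\rightharpoonup 0$. So with your choice of centers the first equality in \eqref{ineq:dichotomy} can fail and $w^{(1)}$ need not have mass $\nu$ (it can vanish identically). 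The standard repair --- and what the references cited in the paper carry out --- is to fix the centers \emph{once}, e.g.\ as centers witnessing non-vanishing at a fixed radius (or maximizing the mass of the weak limit over all translated subsequences), extract the local limit $w^{(1)}$ first, and only then produce $R_{N}\to\infty$ by a diagonal argument from $\int_{|x-y_{N}|\le R}w_{N}\to\int_{|x|\le R}w^{(1)}\nearrow\int_{\mathbb{R}^3}w^{(1)}$; in that order both limits in \eqref{ineq:dichotomy} follow with $\nu:=\int_{\mathbb{R}^3}w^{(1)}$.

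A second, related omission concerns ruling out $\nu=1$: you equate ``tightness modulo translations'' with ``relative compactness in $L^{1}$ up to translations'', but these are not the same. Tightness together with the $L^{4/3}$ bound only yields weak $L^{1}$ compactness (Dunford--Pettis), while the standing assumption of Step 2 is failure of \emph{norm} compactness. To conclude $\nu<1$ --- and to obtain the strong local convergence that the lemma's name promises and that the paper actually uses afterwards, when it passes $D(w_{N}^{(1)},w_{N}^{(1)})\to D(w^{(1)},w^{(1)})$ via strong $L^{1}$ convergence of the localized pieces --- one needs the strong local $L^{1}$ compactness of $\{w_{N}\}$, which the paper explicitly lists as an ingredient. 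Here it comes from the kinetic energy bound of Lemma \ref{lem:coercivity} via a Hoffmann--Ostenhof-type inequality $\tr\sqrt{-\Delta}\,\gamma\ge\langle\sqrt{\rho_{\gamma}},\sqrt{-\Delta}\sqrt{\rho_{\gamma}}\rangle$ and the Rellich theorem for $H^{1/2}$. Your proposal never invokes this, and without it both the exclusion of $\nu=1$ and the downstream use of the lemma remain incomplete.
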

	
	\begin{proof}
		We do not detail the proof of this lemma which uses concentration functions in the spirit of Lions \cite{Lions-84a,Lions-84b} as well as the strong local compactness of $w_{N}$. For instance, a similar argument has been detailed in \cite[Lemma 3.1]{Friesecke-03a} (see also \cite{Lewin-11,LenLew-10}).
	\end{proof}

	We remark that our model given by \eqref{HFB-functional} is invariant under translations. Thus, for the rest of the proof, we may assume that the sequence of translations in Lemma \ref{lem:dichotomy} is given by 
	$$
	y_{N} = 0 \text{ for all } N\geq 1.
	$$

	Let $0\leq \chi^{(1)} \leq 1$ be a fixed smooth function on $\mathbb{R}^3$ such that $\chi^{(1)}(x) \equiv 1$ for $|x|<1$ and $\chi^{(1)}(x) \equiv 0$ for $|x|\geq 2$. Given the sequence $\{R_{N}\}$ from Lemma \ref{lem:dichotomy}, we define the functions $\chi_{R_{N}}^{(1)}(x) = \chi^{(1)}(x/R_{N})$ and $\chi_{R_{N}}^{(2)}(x) = \sqrt{1-\chi_{R_{N}}^{(1)}(x)^{2}}$. Likewise, we define the sequences $\{\tilde{\gamma}_{N}^{(1)}\}_{N\in\mathbb N}$ and $\{\tilde{\gamma}_{N}^{(2)}\}_{N\in\mathbb N}$ by
	$$
	\tilde{\gamma}_{N}^{(i)}(x,y) = \chi_{N^{1/3}R_{N}}^{(i)}(x) \tilde{\gamma}_{N}(x,y) \chi_{N^{1/3}R_{N}}^{(i)}(y),\quad i\in\{1,2\},
	$$
	and $\rho_{\tilde{\gamma}_{N}^{(i)}}(x)=\tilde{\gamma}_{N}^{(i)}(x,x)$. We also set $w_{N}^{(i)}(x) = \chi_{R_{N}}^{(i)}(x)^{2}w_{N}(x)$. The direct term is separated as follows
	\begin{equation}\label{split:second-potential}
	D(w_{N},w_{N}) = D(w_{N}^{(1)},w_{N}^{(1)}) + D(w_{N}^{(2)},w_{N}^{(2)}) + 2D(w_{N}^{(1)},w_{N}^{(2)}).
	\end{equation}
	To show that the last term in \eqref{split:second-potential} is of order $1$, we write 
	$$
	\chi_{R_{N}}^{(2)}(y)^{2} = \chi_{3R_{N}}^{(2)}(y)^{2} + \chi_{R_{N}}^{(2)}(y)^{2} - \chi_{3R_{N}}^{(2)}(y)^{2}.
	$$
	Remark that $\chi_{R_{N}}^{(1)}(x)^{2}|x-y|^{-1}\chi_{3R_{N}}^{(2)}(y)^{2} \leq R_{N}^{-1}$ and
$$
\chi_{R_{N}}^{(1)}(x)^{2}[\chi_{R_{N}}^{(2)}(y)^{2}-\chi_{3R_{N}}^{(2)}(y)^{2}] \leq \mathbbm{1}(R_{N}\leq |y| \leq 6R_{N}).
$$
Thus, we use the Cauchy--Schwarz inequality (see \cite[Theorem 9.8]{LieLos-01}) and \eqref{ineq:HLS} to obtain
$$
D(w_{N}^{(1)},w_{N}^{(2)}) \leq R_{N}^{-1}\|w_{N}\|_{L^{1}}^{2} + C\|w_{N}\|_{L^{4/3}}^{4/3}\|w_{N}\|_{L^{1}}^{1/3}\|w_{N}\mathbbm{1}(R_{N}\leq |y| \leq 6R_{N})\|_{L^{1}}^{1/3}.
$$
The last term converges to $0$ as $N\to \infty$, thanks to Lemma \ref{lem:dichotomy} and the $L^{4/3}$-boundedness of $\{w_{N}\}$, by Remark \ref{rem:coercivity}.

Next, we split the kinetic energy. By using the IMS-type localization formula \cite{LieYau-88,FraLieSei-08} (see also \cite{LenLew-10}) and $\tr\tilde{\gamma}_{N} = \tr\gamma_{N} = N$, we find that
\begin{equation}\label{split:second-kinetic}
\ell_{N}^{-1}\frac{1}{N}\tr\sqrt{-\Delta+m^{2}}\gamma_{N} \geq \frac{1}{N}\tr\sqrt{-\Delta}\tilde{\gamma}_{N}  \geq \frac{1}{N}\tr\sqrt{-\Delta}\tilde{\gamma}_{N}^{(1)} + \frac{1}{N}\tr\sqrt{-\Delta}\tilde{\gamma}_{N}^{(2)} - \frac{C}{N^{1/3}R_{N}}.
\end{equation}
To deal with the second term on the right hand side in \eqref{split:second-kinetic}, we apply Lemma \ref{Chandrasekhar-lower-bound} with $g_{t}(x) = t^{3/4}\exp(-\pi t|x|^{2})$ and $\langle g_{t},\sqrt{-\Delta}g_{t} \rangle = 2t^{1/2}$. We obtain
\begin{align}\label{split:second-2}
\tr\sqrt{-\Delta}\tilde{\gamma}_{N}^{(2)} & \geq K_{\rm cl}\int_{\mathbb{R}^3}(\rho_{\tilde{\gamma}_{N}^{(2)}}\star g_{t}^{2})(x)^{4/3}{\rm d}x - 2t^{1/2}\int_{\mathbb{R}^3}\rho_{\tilde{\gamma}_{N}^{(2)}}(x){\rm d}x \nonumber\\
& \geq NK_{\rm cl}\int_{\mathbb{R}^3}(w_{N}^{(2)}\star g_{t_{N}}^{2})(x)^{4/3}{\rm d}x - 2t^{1/2}N,
\end{align}
where $t_{N} = tN^{2/3}$. Using \eqref{ineq:HLS} and noticing that $\|w_{N}^{(2)}\star g_{t_{N}}^{2}\|_{L^1} \leq \|w_{N}^{(2)}\|_{L^1} \leq \|w_{N}\|_{L^1} = 1$, we get
\begin{equation}\label{split:second-4}
K_{\rm cl}\int_{\mathbb{R}^3}(w_{N}^{(2)}\star g_{t_{N}}^{2})(x)^{4/3}{\rm d}x \geq \frac{\tau_c}{2} D(w_{N}^{(2)}\star g_{t_{N}}^{2},w_{N}^{(2)}\star g_{t_{N}}^{2}).
\end{equation}
On the other hand, we write
\begin{equation}\label{split:second-5}
D(w_{N}^{(2)},w_{N}^{(2)}) - D(w_{N}^{(2)}\star g_{t_{N}}^{2},w_{N}^{(2)}\star g_{t_{N}}^{2}) = \iint_{\mathbb{R}^3 \times \mathbb{R}^3} w_{N}^{(2)}(x)v_{N}(x-y)w_{N}^{(2)}(y){\rm d}x{\rm d}y
\end{equation}
with $v_{N}(x) = |x|^{-1} - (g_{t_{N}}^{2} \star |x|^{-1} \star g_{t_{N}}^{2})(x)$. Using Young's inequality, the integral in \eqref{split:second-5} can be bounded by $\|v_{N}\|_{L^{2}}\|w_{N}^{(2)}\|_{L^{4/3}}^{2}$. By a simple computation we have $\|v_{N}\|_{L^{2}} = Ct_{N}^{-1/4} = Ct^{-1/4}N^{-1/6}$. 
Combinning \eqref{split:second-2}, \eqref{split:second-4} and \eqref{split:second-5} we have
\begin{equation}\label{split:second-6}
\left(1-\frac{\kappa\pi}{4}\right)\frac{1}{N}\tr\sqrt{-\Delta}\tilde{\gamma}_{N}^{(2)} - \frac{\tau_{N}}{2}  D(w_{N}^{(2)},w_{N}^{(2)}) \geq R_{1},
\end{equation}
where we abbreviate by $R_{1}$ the error terms
\begin{equation}\label{split:second-7}
R_{1} := - \left(\frac{\kappa\pi}{4}+\epsilon\right)K_{\rm cl}\|w_{N}^{(2)}\star g_{t_{N}}^{2}\|_{L^{4/3}}^{4/3} + \epsilon \frac{1}{N}\tr\sqrt{-\Delta}\tilde{\gamma}_{N}^{(2)} -  Ct^{-1/4}N^{-1/6}\|w_{N}^{(2)}\|_{L^{4/3}}^{2} - 2t^{1/2}.
\end{equation}
By Daubechies' inequality \cite{Daubechies-83} we have 
\begin{equation}\label{split:Dau}
\tr\sqrt{-\Delta}\tilde{\gamma}_{N}^{(2)} \geq 1.6q^{-1/3}\int_{\mathbb{R}^3}\rho_{\tilde{\gamma}_{N}^{(2)}}(x)^{4/3}{\rm d}x = 1.6q^{-1/3}N\int_{\mathbb{R}^3}w_{N}^{(2)}(x)^{4/3}{\rm d}x.
\end{equation}
Hence, optimizing the last two terms in \eqref{split:second-7} with respect to $t$ and choosing $\epsilon=CN^{-1/9}$ for a suitable constant $C$, we get
\begin{equation}\label{split:second-8}
R_{1} \geq - \left(\frac{\kappa\pi}{4}+\epsilon\right)K_{\rm cl}\|w_{N}^{(2)}\star g_{t_{N}}^{2}\|_{L^{4/3}}^{4/3} \geq - \left(\frac{\kappa\pi}{4}+\epsilon\right)K_{\rm cl}\|w_{N}^{(2)}\|_{L^{4/3}}^{4/3} = o(1)_{N\to\infty}.
\end{equation}
Here we have used the fact that $w_{N}$ (and hence $w_{N}^{(2)}$) is bounded uniformly in $L^{4/3}(\mathbb{R}^3)$ and that $\kappa=\tau_{N}N^{-2/3}=\mathcal{O}(N^{-2/3})$. In summary, from \eqref{split:second-potential}--\eqref{split:second-kinetic}, \eqref{split:second-6}--\eqref{split:second-8} and \eqref{energy:lower} we have derived the following estimate
\begin{align}
\ell_{N}^{-1}\frac{E_{\tau_{N}}^{\rm HFB}(N)}{N} & \geq \left(1-\frac{\kappa\pi}{4}\right)\ell_{N}^{-1}\frac{1}{N}\tr\sqrt{-\Delta+m^{2}}\gamma_{N} - \frac{\tau_{N}}{2}  D(w_{N},w_{N}) \nonumber \\
& \geq 
\left(1-\frac{\kappa\pi}{4}\right)\frac{1}{N}\tr\sqrt{-\Delta}\tilde{\gamma}_{N}^{(1)} - \frac{\tau_{N}}{2}  D(w_{N}^{(1)},w_{N}^{(1)}) + o(1)_{N\to\infty}. \label{split:critical}
\end{align}

It follows from Lemma \ref{lem:dichotomy} that $w_{N}^{(1)}$ converges to $w^{(1)}$ weakly in $L^{4/3}(\mathbb{R}^3)$ and strongly in $L^{1}(\mathbb{R}^3)$. In fact, $w_{N}^{(1)}$ converges to $w^{(1)}$ strongly in $L^{r}(\mathbb{R}^3)$ for $1\leq r <4/3$ because of $L^{4/3}(\mathbb{R}^3)$-boundedness of $\{w_{N}\}$, by Remark \ref{rem:coercivity}. Thus, by the Hardy--Littlewood--Sobolev inequality (see \cite[Theorem 4.3]{LieLos-01}) we have
\begin{equation}\label{split:second-0}
\lim_{N\to\infty}D(w_{N}^{(1)},w_{N}^{(1)}) = D(w^{(1)},w^{(1)}).
\end{equation}
On the other hand, we note that the inequality \eqref{ineq:gamma-alpha} implies the Pauli exclusion principle \cite[Theorem 3.2]{LieSei-10}
$$
0\leq \gamma_{N} \leq 1.
$$
This property is invariant under scaling as well as under restricting on a domain. Hence we may apply Lemma \ref{lem:LT} to the sequence $\tilde{\gamma}_{N}^{(1)}(x,y) = \ell_{N}^{-3}\gamma_{N}^{(1)}(\ell_{N}^{-1}x,\ell_{N}^{-1}y)$ together with the weak convergence $w_{N}^{(1)}\rightharpoonup w^{(1)}$ in $L^{4/3}(\mathbb{R}^3)$. We obtain
\begin{equation}\label{split:second-1}
\lim_{N\to\infty}\frac{1}{N}\tr\sqrt{-\Delta}\tilde{\gamma}_{N}^{(1)} \geq K_{\rm cl}\int_{\mathbb{R}^3}w^{(1)}(x)^{4/3}{\rm d}x.
\end{equation}
Taking the limit $N\to\infty$ in \eqref{split:critical} and using \eqref{split:second-0}, \eqref{split:second-1} together with the asymptotic formula for $E_{\tau_{N}}^{\rm HFB}(N)$ in Lemma \ref{lem:HFB-energy} we obtain
\begin{equation}\label{non-dichotomy}
0 = \lim_{N\to\infty}\ell_{N}^{-1}\frac{E_{\tau_{N}}^{\rm HFB}(N)}{N} \geq \mathcal{E}_{\tau_c}^{\rm Ch}(w^{(1)})|_{m=0} \geq E_{\tau_{c}}^{\rm Ch}(\nu)|_{m=0} = \nu E_{\tau_{c}\nu^{2/3}}^{\rm Ch}(1)|_{m=0} = 0.
\end{equation}
It follows from \eqref{non-dichotomy} that $w^{(1)}$ is a minimizer for $E_{\tau_{c}}^{\rm Ch}(\nu)|_{m=0}$. But this contradicts the fact that the variational problem $E_{\tau_{c}}^{\rm Ch}(\nu)|_{m=0}=0$ has no minimizer for any $0<\nu<1$, which is due to the positivity of the direct term (see \cite[Theorem 9.8]{LieLos-01}). Hence, dichotomy does not occur.

\emph{Step 3: Conclusion.} We conclude that, up to translations, the sequence $\{w_{N}\}$ is relatively compact in $L^{1}(\mathbb{R}^3)$. Hence, there exist a subsequence of $\{w_{N}\}$, still denoted by $\{w_{N}\}$, and a function $w\in L^{1}\cap L^{4/3}(\mathbb{R}^3)$ with $\int_{\mathbb{R}^3}w(x){\rm d}x=1$ such that $w_{N}$ converges to $w$ strongly in $L^1(\mathbb{R}^3)$, weakly in $L^{4/3}(\mathbb{R}^3)$ and pointwise almost everywhere in $\mathbb{R}^3$. In fact, $w_{N}$ converges to $w$ strongly in $L^{r}(\mathbb{R}^{3})$ for $1\leq r<4/3$ because of $L^{4/3}(\mathbb{R}^3)$-boundedness of $\{w_{N}\}$, by Remark \ref{rem:coercivity}. Applying Lemma \ref{lem:LT} to the sequence $\tilde{\gamma}_{N}(x,y) = \ell_{N}^{-3}\gamma_{N}(\ell_{N}^{-1}x,\ell_{N}^{-1}y)$ and using the Hardy--Littlewood--Sobolev inequality (see \cite[Theorem 4.3]{LieLos-01}), we obtain
\begin{equation}\label{fake-Q}
	0 = \lim_{N\to\infty}\ell_{N}^{-1}\frac{E_{\tau_{N}}^{\rm HFB}(N)}{N} \geq \mathcal{E}_{\tau_c}^{\rm Ch}(w)|_{m=0} \geq E_{\tau_{c}}^{\rm Ch}(1)|_{m=0} = 0.
\end{equation}
Here we have used the asymptotic formula for $E_{\tau_{N}}^{\rm HFB}(N)$ in Lemma \ref{lem:HFB-energy} and \eqref{ineq:HLS} for the last estimate. It follows from \eqref{fake-Q} that $w$ is a minimizer for $E_{\tau_{c}}^{\rm Ch}(1)|_{m=0}=0$. In other words, $w$ is an optimizer for \eqref{ineq:HLS} with $\int_{\mathbb{R}^3}w(x){\rm d}x=1$. We recall that \eqref{ineq:HLS} admits a unique (up to translations and dilations) normalized optimizer which satisfies \eqref{eq:massless-neutron-star} (after a suitable scaling). Therefore, we have
$$
w(x)=b^{3}Q(bx)
$$
for some $b>0$, and for $Q\in L^{1}\cap L^{4/3}(\mathbb{R}^{3})$ the unique non-negative radially symmetric decreasing solution to the equation \eqref{eq:massless-neutron-star}. Note that $\int_{\mathbb{R}^3}Q(x){\rm d}x = \int_{\mathbb{R}^3}w(x){\rm d}x=1$. Hence, we deduce from \eqref{eq:massless-neutron-star} and \eqref{fake-Q} that $Q$ satisfies \eqref{cond:LE}.

We shall show that $b=1$ and hence $w\equiv Q$. We first apply Lemma \ref{Chandrasekhar-lower-bound} with $g_{t}(x) = t^{3/4}\exp(-\pi t|x|^{2})$ and $\langle g_{t},\sqrt{-\Delta}g_{t} \rangle = 2t^{1/2}$ to obtain
\begin{equation}\label{conv:critical-1}
\tr\sqrt{-\Delta+m^{2}}\gamma_{N} \geq \int_{\mathbb{R}^3}j_{m}((\rho_{\gamma_{N}}\star g_{t}^{2})(x)){\rm d}x - 2Nt^{1/2}.
\end{equation}
By a simple scaling $\rho_{\gamma_{N}}(x) = \ell_{N}^{3}w_{N}(N^{-1/3}\ell_{N} x)$ using \eqref{kinetic} we have
\begin{equation}\label{conv:critical-2}
\int_{\mathbb{R}^3}j_{m}((\rho_{\gamma_{N}}\star g_{t}^{2})(x)){\rm d}x = N\ell_{N} \int_{\mathbb{R}^3}j_{m\ell_{N}^{-1}}((w_{N}\star g_{t_{N}}^{2})(x)){\rm d}x,
\end{equation}
where $t_{N} = tN^{2/3}\ell_{N}^{-2}$. Now we define the function $\tilde{j}_{m}$ by
\begin{align*}
\tilde{j}_{m}(\rho) : & = \frac{q}{(2\pi)^3}\int_{|p|<(6\pi^{2}\rho/q)^{1/3}}\frac{1}{\sqrt{|p|^{2}+m^{2}}}{\rm d}p \\
& = \frac{q}{4\pi^{2}}\left[\eta\sqrt{\eta^{2}+m^{2}}-m^{2}\ln\left(\frac{\eta+\sqrt{\eta^{2}+m^{2}}}{m}\right)\right],\quad \eta=\left(\frac{6\pi^{2}\rho}{q}\right)^{1/3}.
\end{align*}
Then we have 
\begin{equation}\label{conv:critical-3}
\int_{\mathbb{R}^3}j_{m\ell_{N}^{-1}}((w_{N}\star g_{t_{N}}^{2})(x)){\rm d}x \geq K_{\rm cl}\int_{\mathbb{R}^3}(w_{N}\star g_{t_{N}}^{2})(x)^{4/3}{\rm d}x + \frac{m^{2}\ell_{N}^{-2}}{2}\int_{\mathbb{R}^3}\tilde{j}_{m\ell_{N}^{-1}}((w_{N}\star g_{t_{N}}^{2})(x)){\rm d}x,
\end{equation}
which follows from the operator inequality 
\begin{equation}\label{ineq:operator}
\sqrt{|p|^{2}+m^{2}}\geq |p|+\frac{m^{2}}{2\sqrt{|p|^{2}+m^{2}}}.
\end{equation}
Using \eqref{ineq:HLS} and noticing that $\|w_{N}\star g_{t_{N}}^{2}\|_{L^1} \leq \|w_{N}\|_{L^1} = 1$ we get
\begin{equation}\label{conv:critical-4}
K_{\rm cl}\int_{\mathbb{R}^3}(w_{N}\star g_{t_{N}}^{2})(x)^{4/3}{\rm d}x \geq \frac{\tau_c}{2} D(w_{N}\star g_{t_{N}}^{2},w_{N}\star g_{t_{N}}^{2}).
\end{equation}
On the other hand, we write
\begin{equation}\label{conv:critical-5}
D(w_{N},w_{N}) - D(w_{N}\star g_{t_{N}}^{2},w_{N}\star g_{t_{N}}^{2}) = \iint_{\mathbb{R}^3 \times \mathbb{R}^3} w_{N}(x)v_{N}(x-y)w_{N}(y){\rm d}x{\rm d}y
\end{equation}
with $v_{N}(x) = |x|^{-1} - (g_{t_{N}}^{2} \star |x|^{-1} \star g_{t_{N}}^{2})(x)$. Using Young's inequality, the integral in \eqref{conv:critical-5} can be bounded by $\|v_{N}\|_{L^{2}}\|w_{N}\|_{L^{4/3}}^{2}$. By a simple computation we have $\|v_{N}\|_{L^{2}} = Ct_{N}^{-1/4} = Ct^{-1/4}N^{-1/6}\ell_{N}^{1/2}$. 
Combining \eqref{conv:critical-1}--\eqref{conv:critical-5} together with \eqref{energy:lower} we have
\begin{align}
\frac{E_{\tau_{N}}^{\rm HFB}(N)}{N} & \geq \left(1-\frac{\kappa\pi}{4}\right)\frac{1}{N}\tr\sqrt{-\Delta+m^{2}}\gamma_{N} - \frac{\kappa}{2}  D(\rho_{\gamma_{N}},\rho_{\gamma_{N}}) \nonumber \\
& \geq \left(1-\frac{\kappa\pi}{4}-\epsilon\right)\ell_{N}^{-1}\frac{m^{2}}{2}\int_{\mathbb{R}^3}\tilde{j}_{m\ell_{N}^{-1}}((w_{N}\star g_{t_{N}}^{2})(x)){\rm d}x + \ell_{N}\frac{\tau_c-\tau_{N}}{2}D(w_{N},w_{N}) + R_{2} \label{conv:critical-6}
\end{align}
where we abbreviate by $R_{2}$ the remainder terms 
\begin{align}\label{reminders}
R_{2} & := - \left(\frac{\kappa\pi}{4}+\epsilon\right)\ell_{N}K_{\rm cl}\|w_{N}\star g_{t_{N}}^{2}\|_{L^{4/3}}^{4/3} + \frac{\epsilon}{N}\tr\sqrt{-\Delta+m^{2}}\gamma_{N} \nonumber \\
& \qquad - Ct^{-1/4}N^{-1/6}\ell_{N}^{3/2}\|w_{N}\star g_{t_{N}}^{2}\|_{L^{4/3}}^{2} - 2t^{1/2}.
\end{align}
By Daubechies' inequality \cite{Daubechies-83} we have 
$$
\tr\sqrt{-\Delta}\gamma_{N} \geq 1.6q^{-1/3}\int_{\mathbb{R}^3}\rho_{\gamma_{N}}(x)^{4/3}{\rm d}x = 1.6q^{-1/3}N\ell_{N}\int_{\mathbb{R}^3}w_{N}(x)^{4/3}{\rm d}x.
$$
Optimizing the last two terms in \eqref{reminders} with respect to $t$, whence $t_{N}=CN^{4/9}\to \infty$ as $N\to\infty$, and choosing $\epsilon=Cq^{1/3}N^{-1/9}$ for a suitable constant $C$, we get
\begin{equation}\label{conv:critical-7}
R_{2} \geq - \left(\frac{\kappa\pi}{4}+\epsilon\right)\ell_{N}K_{\rm cl}\|w_{N}\star g_{t_{N}}^{2}\|_{L^{4/3}}^{4/3} \geq -\ell_{N}^{-1} o(1)_{N\to\infty}.
\end{equation}
Here we have used the fact that $w_{N}$ is bounded uniformly in $L^{4/3}(\mathbb{R}^3)$ and that $\kappa=\tau_{N}N^{-2/3}=\mathcal{O}(N^{-2/3})$. Putting \eqref{conv:critical-6} and \eqref{conv:critical-7} together we obtain
\begin{equation}\label{liminf:energy}
\frac{\ell_{N}}{\Lambda}\cdot\frac{E_{\tau_{N}}^{\rm HFB}(N)}{N} \geq (1+o(1)_{N\to\infty})\frac{m^{2}}{2\Lambda}\int_{\mathbb{R}^3}\tilde{j}_{m\ell_{N}^{-1}}((w_{N}\star g_{t_{N}}^{2})(x)){\rm d}x + \frac{\Lambda}{2} D(w_{N},w_{N}) + o(1)_{N\to\infty}.
\end{equation}
Now we note that the strong convergence $w_{N}\to w$ in $L^{r}(\mathbb{R}^{3})$ for $1\leq r<4/3$ implies the strong convergence $w_{N}\star g_{t_{N}}^{2}\to w$ in $L^{r}(\mathbb{R}^{3})$ for $1\leq r<4/3$. This follows from the fact that $w\star g_{t_{N}}^{2} \to w$ strongly in $L^{r}(\mathbb{R}^{3})$ for $1\leq r<4/3$ (recall that $t_{N}\to\infty$) and that
\begin{align*}
\|w_{N}\star g_{t_{N}}^{2} - w\|_{L^r} & \leq \|(w_{N}-w)\star g_{t_{N}}^{2}\|_{L^r} + \|w\star g_{t_{N}}^{2} - w\|_{L^r} \\
& \leq \|(w_{N}-w)\|_{L^r} + \|w\star g_{t_{N}}^{2} - w\|_{L^r}.
\end{align*}
Here we have used Minkowski's inequality and Young's inequality. Thus, we conclude that $w_{N}\star g_{t_{N}}^{2}\to w$ pointwise almost everywhere in $\mathbb{R}^3$, up to extraction of a subsequence. By Fatou's lemma we have 
\begin{align}\label{liminf kinetic}
\liminf_{N\to\infty}\int_{\mathbb{R}^3}\tilde{j}_{m\ell_{N}^{-1}}((w_{N}\star g_{t_{N}}^{2})(x)){\rm d}x \geq \frac{q}{4\pi^{2}}\int_{\mathbb{R}^3}\theta(x)^{2}{\rm d}x = \frac{9}{8bK_{\rm cl}}\int_{\mathbb{R}^3}Q(x)^{2/3}{\rm d}x,
\end{align}
where $\theta=(6\pi^{2} w/q)^{1/3}$. On the other hand, by the Hardy--Littlewood--Sobolev inequality (see \cite[Theorem 4.3]{LieLos-01}) we have
\begin{align}\label{liminf:D}
\lim_{N\to\infty}D(w_{N},w_{N}) = D(w,w) = bD(Q,Q) = 2b.
\end{align}
Thus, after passing to the limit $N\to\infty$ in \eqref{liminf:energy} and using the asymptotic formula for $E_{\tau_{N}}^{\rm HFB}(N)$ in Lemma \ref{lem:HFB-energy} we obtain
\begin{equation}\label{eq:b}
2\Lambda \geq \frac{9m^{2}}{16b\Lambda K_{\rm cl}}\int_{\mathbb{R}^3}Q(x)^{2/3}{\rm d}x + b\Lambda .
\end{equation}
It is elementary to check that
$$
\inf_{\lambda>0}\left(\frac{9m^{2}}{16\lambda K_{\rm cl}}\int_{\mathbb{R}^3}Q(x)^{2/3}{\rm d}x + \lambda\right) = 2\Lambda
$$
with the unique optimal value $\lambda = \Lambda$. Therefore, the equality in \eqref{eq:b} must occur and hence $b = 1$. We thus have shown that, up to extraction of a subsequence, $w_{N}$ converges to the unique Lane--Emden solution satisfying \eqref{cond:LE}--\eqref{eq:massless-neutron-star}. This completes the proof of Theorem \ref{thm:behavior}.

\section*{Acknowledgements}

The manuscript was completed when the author was visiting the Mittag--Leffler Institute for the semester program \emph{Spectral Methods in Mathematical Physics}. The author would like to thank the organizers for their warm hospitality. He also thanks E. Lenzmann for helpful comments. The research received funding from the Deutsche Forschungsgemeinschaft (DFG, German Research Foundation) under Germany's Excellence Strategy -- EXC-2111 -- 390814868.
  
\medskip


\begin{thebibliography}{10}
	
	\bibitem{AscFroGraSchTro-02} W.~Aschbacher, J.~Fr{\"o}hlich, G.~Graf, K.~Schnee, and M.~Troyer, {\em{Symmetry breaking regime in the nonlinear Hartree equation}}, Journal of
	Mathematical Physics, 43 (2002), pp.~3879--3891.
	
	\bibitem{BacFroJon-09} V.~Bach, J.~Fr{\"o}hlich, and B.~L.~G. Jonsson, {\em {Bogolubov--{H}artree--{F}ock mean field theory for neutron stars and other systems with attractive interactions}}, Journal of Mathematical Physics, 50 (2009), p.~102102.
	
	\bibitem{BacLieSol-94} V.~Bach, E.~H. Lieb, and J.~P. Solovej, {\em {Generalized {H}artree--{F}ock theory and the {H}ubbard model}}, Journal of Statistical Physics, 76 (1994), pp.~3--89.
	
	\bibitem{Chandrasekhar-31a} S.~Chandrasekhar, {\em {The maximum mass of ideal white dwarfs}}, Astrophysical Journal, 74 (1931), pp.~81--82.
	
	\bibitem{Daubechies-83} I.~Daubechies, {\em {An uncertainty principle for fermions with generalized kinetic energy}}, Communications in Mathematical Physics, 90 (1983), pp.~511--520.
	
	\bibitem{FouLewSol-18} S.~Fournais, M.~Lewin, and J.~P. Solovej, {\em The semi-classical limit of large fermionic systems}, Calculus of Variations and Partial Differential Equations, 57 (2018), p.~105.
	
	\bibitem{FraLen-13} R.~L. Frank and E.~Lenzmann, {\em {Uniqueness of non-linear ground states for fractional Laplacians in $\mathbb {R}$}}, Acta mathematica, 210 (2013), pp.~261--318.
	
	\bibitem{FraLenSil-16} R.~L. Frank, E.~Lenzmann, and L.~Silvestre, {\em {Uniqueness of radial solutions for the fractional Laplacian}}, Communications on Pure and Applied Mathematics, 69 (2016), pp.~1671--1726.
	
	\bibitem{FraLieSei-08} R.~L. Frank, E.~H. Lieb, and R.~Seiringer, {\em {Hardy--Lieb--Thirring} inequalities for fractional {S}chr{\"o}dinger operators}, Journal of the American Mathematical Society, 21 (2008), pp.~925--950.
	
	\bibitem{Friesecke-03a} G.~Friesecke, {\em {The multiconfiguration equations for atoms and molecules: charge quantization and existence of solutions}}, Archive for Rational Mechanics and Analysis, 169 (2003), pp.~35--71.
	
	\bibitem{FroLen-07f} J.~Fr{\"o}hlich and E.~Lenzmann, {\em {Dynamical collapse of white dwarfs in {H}artree- and {H}artree--{F}ock theory}}, Communications in Mathematical Physics, 274 (2007), pp.~737--750.
	
	\bibitem{GadBarHan-80} S.~R. Gadre, L.~J. Bartolotti, and N.~C. Handy, {\em Bounds for {C}oulomb energies}, The Journal of Chemical Physics, 72 (1980), pp.~1034--1038.
	
	\bibitem{GuoZen-17} Y.~Guo and X.~Zeng, {\em {Ground States of Pseudo-Relativistic Boson Stars under the Critical Stellar Mass}}, Annales de l'Institut Henri Poincar\'e (C) Analyse Non Lin{\'e}aire, 34 (2017), pp.~1611--1632.
	
	\bibitem{Hainzl-10} C.~Hainzl, {\em On the static and dynamical collapse of white dwarfs}, Entropy and the Quantum: Arizona School of Analysis with Applications, March 16-20, 2009, University of Arizona, 529 (2010), pp.~189--202.
	
	\bibitem{HaiLenLewSch-10} C.~Hainzl, E.~Lenzmann, M.~Lewin, and B.~Schlein, {\em {On Blowup for Time-Dependent Generalized {H}artree--{F}ock equations}}, Annales Henri Poincar{\'e}, 11 (2010), pp.~1023--1052.
	
	\bibitem{HaiSch-09} C.~Hainzl and B.~Schlein, {\em {Stellar Collapse in the Time Dependent {{H}artree{--}{F}ock} approximation}}, Communications in Mathematical Physics, 287 (2009), pp.~705--717.
	
	\bibitem{Herbst-77} I.~W. Herbst, {\em {Spectral theory of the operator {$(p\sp{2}+m\sp{2})\sp{1/2}-Ze\sp{2}/r$}}}, Communications in Mathematical Physics, 53 (1977), pp.~285--294.
	
	\bibitem{Kato-95} T.~Kato, {\em {Perturbation theory for linear operators}}, Springer, second~ed., 1995.
	
	\bibitem{Lane-70} J.~H. Lane, {\em On the theoretical temperature of the sun; under the hypothesis of a gaseous mass maintaining its volume by its internal heat, and depending on the laws of gases as known to terrestrial experiment}, American Journal of Science and Arts, 50 (1870), pp.~57--74.
	
	\bibitem{LenLew-10} E.~Lenzmann and M.~Lewin, {\em {Minimizers for the {H}artree--{F}ock--{B}ogoliubov Theory of Neutron Stars and White Dwarfs}}, Duke Mathematical Journal, 152 (2010), pp.~257--315.
	
	\bibitem{LenLew-11} \leavevmode\vrule height 2pt depth -1.6pt width 23pt, {\em {On singularity formation for the ${L}^2$-critical {B}oson star equation}}, Nonlinearity, 24
	(2011), pp.~3515--3540.
	
	\bibitem{Lewin-11} M.~Lewin, {\em {Geometric methods for nonlinear many-body quantum systems}}, Journal of Functional Analysis, 260 (2011), pp.~3535--3595.
	
	\bibitem{Lieb-83} E.~H. Lieb, {\em {On the lowest eigenvalue of the {L}aplacian for the intersection of two domains}}, Inventiones Mathematicae, 74 (1983), pp.~441--448.
	
	\bibitem{LieLos-01} E.~H. Lieb and M.~Loss, {\em {Analysis}}, vol.~14 of {Graduate Studies in Mathematics}, American Mathematical Society, Providence, RI, 2nd~ed., 2001.
	
	\bibitem{LieOxf-80} E.~H. Lieb and S.~Oxford, {\em {Improved lower bound on the indirect {C}oulomb energy}}, International Journal of Quantum Chemistry, 19 (1980), pp.~427--439.
	
	\bibitem{LieSei-10} E.~H. Lieb and R.~Seiringer, {\em {The {S}tability of {M}atter in {Q}uantum {M}echanics}}, Cambridge University Press, 2010.
	
	\bibitem{LieThi-84} E.~H. Lieb and W.~E. Thirring, {\em {Gravitational collapse in quantum mechanics with relativistic kinetic energy}}, Annals of Physics, 155 (1984), pp.~494--512.
	
	\bibitem{LieYau-87} E.~H. Lieb and H.-T. Yau, {\em {{The {C}handrasekhar theory of Stellar Collapse as the Limit of Quantum Mechanics}}}, Communications in Mathematical Physics, 112 (1987), pp.~147--174.
	
	\bibitem{LieYau-88} \leavevmode\vrule height 2pt depth -1.6pt width 23pt, {\em {The stability and instability of relativistic matter}}, Communications in Mathematical Physics, 118 (1988), pp.~177--213.
	
	\bibitem{Lions-84a} P.-L. Lions, {\em {The concentration-compactness principle in the calculus of variations. {T}he locally compact case, {P}art {I}}}, Annales de l'Institut Henri Poincar\'e (C) Analyse Non Lin\'eaire, 1 (1984), pp.~109--149.
	
	\bibitem{Lions-84b} \leavevmode\vrule height 2pt depth -1.6pt width 23pt, {\em {The concentration-compactness principle in the calculus of variations. {T}he locally compact case, {P}art {II}}}, Annales de l'Institut Henri Poincar\'e (C) Analyse Non Lin\'eaire, 1 (1984), pp.~223--283.
	
	\bibitem{Maeda-10} M.~Maeda, {\em {On the symmetry of the ground states of nonlinear {S}chr{\"o}dinger equations with potential}}, Advanced Nonlinear Studies, 10 (2010), pp.~895--925.
	
	\bibitem{Nguyen-17a} D.-T. Nguyen, {\em {Blow-up profile of ground states for the critical boson star}}, arXiv preprint arXiv:1703.10324v1,  (2017).
	
	\bibitem{Nguyen-17b} \leavevmode\vrule height 2pt depth -1.6pt width 23pt, {\em {On Blow-up Profile of Ground States of Boson Stars with External Potential}}, Journal of Statistical Physics, 169 (2017), pp.~395--422.
	
	\bibitem{Nguyen-19f} \leavevmode\vrule height 2pt depth -1.6pt width 23pt, {\em {Blow-up Profile of Neutron Stars in the Chandrasekhar theory}}, Journal of Mathematical Physics, 60 (2019), p.~071508.
	
	\bibitem{Nguyen-19b} \leavevmode\vrule height 2pt depth -1.6pt width 23pt, {\em {Many-Body Blow-Up Profile of Boson Stars with External Potentials}}, Review in Mathematical Physics, 31 (2019), p.~1950034.
	
	\bibitem{Straumann-12} N.~Straumann, {\em General relativity and relativistic astrophysics}, Springer Science \& Business Media, 2012.
	
	\bibitem{Thomas-27} L.~H. Thomas, {\em {The calculation of atomic fields}}, Mathematical Proceedings of the Cambridge Philosophical Society, 23 (1927), pp.~542--548.
	
	\bibitem{Weinberg-72} S.~Weinberg, {\em Gravitation and cosmology: principles and applications of the general theory of relativity}, vol.~1, Wiley New York, 1972.
	
	\bibitem{YanYan-17} J.~Yang and J.~Yang, {\em {Existence and mass concentration of pseudo-relativistic {H}artree equation}}, Journal of Mathematical Physics, 58 (2017), p.~081501.
	
\end{thebibliography}
\end{document}